%% This is file `elsarticle-template-1-num.tex',
%%
%% Copyright 2009 Elsevier Ltd
%%
%% This file is part of the 'Elsarticle Bundle'.
%% ---------------------------------------------
%%
%% It may be distributed under the conditions of the LaTeX Project Public
%% License, either version 1.2 of this license or (at your option) any
%% later version.  The latest version of this license is in
%%    http://www.latex-project.org/lppl.txt
%% and version 1.2 or later is part of all distributions of LaTeX
%% version 1999/12/01 or later.
%%
%% The list of all files belonging to the 'Elsarticle Bundle' is
%% given in the file `manifest.txt'.
%%
%% Template article for Elsevier's document class `elsarticle'
%% with numbered style bibliographic references
%%
%% $Id: elsarticle-template-1-num.tex 149 2009-10-08 05:01:15Z rishi $
%% $URL: http://lenova.river-valley.com/svn/elsbst/trunk/elsarticle-template-1-num.tex $
%%
\documentclass[preprint,12pt]{elsarticle}
\usepackage{lipsum}
\makeatletter
\def\ps@pprintTitle{%
	\let\@oddhead\@empty
	\let\@evenhead\@empty
	\def\@oddfoot{}%
	\let\@evenfoot\@oddfoot}
\makeatother

%% Use the option review to obtain double line spacing
%% \documentclass[preprint,review,12pt]{elsarticle}

%% Use the options 1p,twocolumn; 3p; 3p,twocolumn; 5p; or 5p,twocolumn
%% for a journal layout:
%% \documentclass[final,1p,times]{elsarticle}
%% \documentclass[final,1p,times,twocolumn]{elsarticle}
%% \documentclass[final,3p,times]{elsarticle}
%% \documentclass[final,3p,times,twocolumn]{elsarticle}
%% \documentclass[final,5p,times]{elsarticle}
%% \documentclass[final,5p,times,twocolumn]{elsarticle}

%% if you use PostScript figures in your article
%% use the graphics package for simple commands
%% \usepackage{graphics}
%% or use the graphicx package for more complicated commands
%% \usepackage{graphicx}
%% or use the epsfig package if you prefer to use the old commands
%% \usepackage{epsfig}

%% The amssymb package provides various useful mathematical symbols
\usepackage{amssymb}
%% The amsthm package provides extended theorem environments
%% \usepackage{amsthm}

%% The lineno packages adds line numbers. Start line numbering with
%% \begin{linenumbers}, end it with \end{linenumbers}. Or switch it on
%% for the whole article with \linenumbers after \end{frontmatter}.
\usepackage{lineno}
\usepackage{enumitem}

%% natbib.sty is loaded by default. However, natbib options can be
%% provided with \biboptions{...} command. Following options are
%% valid:

%%   round  -  round parentheses are used (default)
%%   square -  square brackets are used   [option]
%%   curly  -  curly braces are used      {option}
%%   angle  -  angle brackets are used    <option>
%%   semicolon  -  multiple citations separated by semi-colon
%%   colon  - same as semicolon, an earlier confusion
%%   comma  -  separated by comma
%%   numbers-  selects numerical citations
%%   super  -  numerical citations as superscripts
%%   sort   -  sorts multiple citations according to order in ref. list
%%   sort&compress   -  like sort, but also compresses numerical citations
%%   compress - compresses without sorting
%%
%% \biboptions{comma,round}

% \biboptions{}

\journal{}

\usepackage{rotating}

\usepackage{lmodern}
\usepackage{array}
\usepackage{graphicx}
\usepackage{multirow}
\usepackage{tabularx}
\usepackage{lscape}
%used for rotating a table, or a figure----\begin{sidewaysfigure}--\centering--\includegraphics[scale=0.6]{pic_box.pdf}---\caption{Schematic map of the paper.}--\label{fig:Outline of the paper}---\end{sidewaysfigure}

\usepackage{rotating}
%used for combining multiple cells
\usepackage{multirow}
\usepackage{fullpage}
\usepackage{anysize}
\marginsize{2cm}{2cm}{2cm}{2cm}
\usepackage{filecontents}
\usepackage{natbib}
\usepackage{bibentry}
\nobibliography*
\usepackage[boxed,ruled,vlined,linesnumbered]{algorithm2e}
\usepackage{url}
\usepackage{framed}
\usepackage{hyperref}
\usepackage{mdframed}

\sloppy
\raggedbottom
\usepackage{todonotes}
\usepackage{tikz}
\usepackage{tikz-cd}
\usetikzlibrary{matrix,shapes,arrows,positioning,chains, calc}
\usetikzlibrary{automata,matrix,shapes,arrows,positioning,chains,calc}
\usetikzlibrary{snakes}
\usetikzlibrary{arrows,scopes}
\usetikzlibrary{positioning,chains,fit,shapes,calc}
\usepackage{caption}
\usepackage{subcaption}
\usepackage[none]{hyphenat}
\usepackage{times}
\usepackage{amsmath}
\SetKwHangingKw{Local}{Local}
\SetKwHangingKw{Global}{Global}
\SetKwHangingKw{Constant}{Constant}
\SetKwHangingKw{Input}{Input}
\SetKwHangingKw{Alias}{Alias}
\SetKwHangingKw{Output}{Output}
\SetKwHangingKw{SideEffect}{Side effects}
\SetKwHangingKw{InternalEvent}{Internal event}
\SetKwHangingKw{External}{External}
\SetKwHangingKw{ExternalEvent}{External event}
\SetKwHangingKw{Precondition}{Precondition :}
\SetKwHangingKw{Postcondition}{Postcondition :}
\SetKwHangingKw{Upon}{Upon}
\SetKwHangingKw{DoForever}{Do Forever}
\SetKwHangingKw{PVab}{Persistent variables:}

\usepackage{theorem}
\usepackage{wrapfig}
\usepackage{amssymb}
\usepackage{xspace}
\newtheorem{assumption}{Assumption}
\newtheorem{condition1}{Condition}
\newtheorem{definition}{Definition}

\newtheorem{theorem}{Theorem}

\newenvironment{proof}[1][Proof]{\noindent\textbf{#1.} }{\ \rule{0.5em}{0.5em}}
%\newtheorem{thm}{Theorem}
%\newtheorem{lem}[thm]{Lemma}
%\newdefinition{rmk}{Remark}
%\newproof{pf}{Proof}
%\newproof{pot}{Proof of Theorem \ref{thm2}}
%\newtheorem{theorem}{Theorem}[section]
\newtheorem{proposition}{Proposition}

\newtheorem{lemma}[theorem]{Lemma}

\newcommand{\B}{\vspace*{-\smallskipamount}}

\newcommand{\BBB}{\vspace*{-\bigskipamount}}

\newcommand{\remove}[1]{}

%\circ~

\setcounter{secnumdepth}{3}
\usepackage[T1]{fontenc}
\usepackage[utf8]{inputenc}

\usepackage{setspace}
%\doublespacing
%\renewcommand{\baselinestretch}{2.5}
%\onehalfspacing
%\usepackage{lineno}

\usepackage{csquotes}
\usepackage{enumitem}
\usepackage{natbib}
\setlength{\bibsep}{0.0pt}
\usepackage{tablefootnote}
\usepackage{longtable}
\usepackage{fnpct}

\usepackage{multirow}
\usepackage{hhline}

\begin{document}
\title{Peripheral Authentication for Parked Vehicles over Wireless Radio Communication\footnote{This is a full version of a short paper published in IEEE NCA 2016. In the cover letter, we provide details of the extended part in this version.}}
%\linenumbers
\begin{frontmatter}

\date{}

  \author[add1]{Shlomi Dolev\corref{cor1}}
  \ead{dolev@cs.bgu.ac.il}

  %\author[add2]{Yin Li}
  %\ead{yunfeiyangli@gmail.com}

  \author[add2]{Nisha Panwar}
  \ead{npanwar@uci.edu}

\cortext[cor1]{The first author's research was partially supported by the Rita Altura Trust Chair in Computer Sciences; the Lynne and William Frankel Center for Computer Science; the grant from the Ministry of Science, Technology and Space, Israel, and the National Science Council (NSC) of Taiwan; the Ministry of Foreign Affairs, Italy; the Ministry of Science, Technology and Space, Infrastructure Research in the Field of Advanced Computing and Cyber Security and the Israel National Cyber Bureau.}

  \address[add1]{Ben-Gurion University of the Negev, Israel.}
 % \address[add2]{Xinyang Normal University, China.}
  \address[add2]{University of California, Irvine, USA.}

\begin{abstract}
Peripheral authentication is an important aspect in the vehicle networks to provide services to only authenticated peripherals and a security to internal vehicle modules such as anti-lock braking system, power-train control module, engine control unit, transmission control unit, and tire pressure monitoring. This paper presents a vehicle to a peripheral device and peripheral device to vehicle authentication scheme that verifies a binding between a vehicle and an authentic user peripheral device. In particular, a three-way handshake scheme is proposed for a vehicle to a keyfob authentication. A keyfob is a key with a secure hardware that communicates and authenticates the vehicle over the wireless channel. Usually, a secret pin number is entered through the wireless keyfob and the pin must be verified before receiving an access to the vehicle. Conventionally, a vehicle to keyfob authentication is realized through a challenge-response verification protocol. An authentic coupling between the vehicle identity and the keyfob avoids any illegal access to the vehicle. However, these authentication messages can be relayed by an active adversary, thereby, can amplify the actual distance between an authentic vehicle and a keyfob. Eventually, an adversary can possibly gain access to the vehicle by relaying wireless signals and without any effort to generate or decode the secret credentials. Hence, the vehicle to keyfob authentication scheme must contain an additional attribute verification such as physical movement of a keyfob holder.

Our solution is a two-party and three-way handshake scheme with proactive and reactive commitment verification. 
The proposed solution also uses a time interval verification such that both vehicle and keyfob would yield a similar locomotion pattern of a dynamic keyfob within a similar observational time interval. Hence, the solution is different from the distance bounding protocols that require multiple iterations for the round-trip delay measurement. The proposed scheme is shown to be adaptable with the existing commitment scheme such as Schnorr identification scheme and Pedersen commitment scheme.	
\end{abstract}
\begin{keyword}
Authentication, access control, event data recorders, challenge-response pairs, verification.
\end{keyword}

\end{frontmatter}
%\newpage
%\tableofcontents

%\pagebreak[1]
  %\global\pdfpageattr\expandafter{\the\pdfpageattr/Rotate 90}
%\pagebreak[6]
%\global\pdfpageattr\expandafter{\the\pdfpageattr/Rotate 360}

%\thispagestyle{empty}
%\newpage
\pagenumbering{arabic}
\setcounter{page}{1}
%===========================================================================================================================
%===========================================================================================================================
% Section 1                 Section 1               Section 1               Section 1               Section 1
%===========================================================================================================================
%===========================================================================================================================
\newpage
\section{Introduction}
\label{sect:Intro}
Currently, vehicles are leveraged as a secure mobile information system~\cite{model,secureit} in the Internet of Things (IoT) environment such as smart cities, smart communities, smart contracts, smart homes, etc. In order to allow the wireless communication capabilities, these vehicles must be compliant with the Dedicated Short Range Communication (DSRC) IEEE 1609~\cite{eded,od} based on Wireless Access in Vehicular Environment (WAVE) 802.11p~\cite{tuto}. There has been a tremendous amount of research on how to securely drive vehicles while simultaneously communicating with neighboring vehicles so that a warning can be received/predicted ahead of time. Examples for different types of communication among neighboring vehicles are vehicle platooning protocols for fuel efficiency, vehicle tracking protocols for traffic efficiency, data dissemination and offloading protocols for road side units (RSU), vehicle localization protocols for safety and infotainment routing, etc. However, another crucial aspect is to authorize an access to a vehicle via peripheral device connections~\cite{beath}. Our focus is to highlight the vulnerabilities attached to a static vehicle. It might be less intuitive to imagine the threat use cases for a {\em static} or parked vehicle. However, the static vehicle silhouette is even more vulnerable to all possible attacks once the vehicle is accessible to rogue peripheral devices. For example, an adversary can use repeaters to revive weak signals from a far distanced keyfob and thereby, receive an ahead of time access to the vehicle. Also, a brute-force method can be used to exhaustively compute the correct response and then forwarding it to the vehicle before the original response arrives at the vehicle. A more formal description regarding the attack scenarios is given in Section~\ref{sect:mod}. Therefore, we provide a general authentication for these peripheral devices and also improve the mobile keyfob authentication in a separate scheme. 

\medskip
\noindent\textit{Peripheral authentication:} A secure digital periphery of the vehicle is achieved via a secure authentication with respect to paired devices. Specifically, any temporary peripheral device connection with the vehicle must be authenticated for the extended functional security of the vehicle (see ISO 26262 vehicle functional security standard~\cite{iso}) while driving with a plugged-in rogue device. These peripheral devices such as keyfobs, USB sticks, cell phones, and, iPods provide extended services to the vehicle. Evidently, these ad hoc vehicle to device connections are potential exposure to external threats to break-in an otherwise static and secure vehicle periphery. Our motivation is to secure the peripheral device integration, especially, remote vehicle access via the keyfob. A keyfob ($\mathit{kf}$) is a hardware security token to allow only an authentic access to a static vehicle ($v$) situated remotely.

In particular, the problem is beyond the effort to place a secure firewall for filtering any external threats due to a range of relay and impersonation attacks (as presented in Section~\ref{sect:mod}). A secure remote access is most crucial among other peripheral device connections because vehicle access via a keyfob has a wider horizon of attacks. Therefore, it is important to identify, authenticate and pair the correct keyfob (while continuously approaching towards the parked vehicle) by measuring an active locomotion pattern of the keyfob and the keyfob holder. It must be noted that the proposed approach provides the event data history as a preliminary means to pair and authenticate any peripheral devices, however, in case of a keyfob an additional verification regarding keyfob dynamics is essential and provides stronger security against replay attacks.

\medskip

\noindent\textit{V2X communication paradigm:} There exist multiple dimensions to vehicular communication forte; for example, vehicle to infrastructure (V2I with road side units for service discovery across smart highways or smart cities), vehicle to cloud (V2C), vehicle to IoT (V2IoT communication with platoons or RFID enabled smart highways), vehicle to smart homes (V2SH), vehicle to peripheral device (V2PD for plug-in device authentication~\cite{surgon}), vehicle to vehicle (V2V) and vehicle to owner (V2O for personalized human-computer interaction). 

We chose to provide a secure authentication protocol for a vehicle to peripheral device (V2PD) communication. In general, our solution provides a secure binding between the vehicle and authentic peripheral devices. Furthermore, the solution is extended with an additional customization for vehicle to keyfob binding via user authentication. Basically, the solution considers a static vehicle and securely bind the available services at the vehicle via user authentication. It must be noted that a dynamic vehicle is vulnerable to a wider attack surface, yet a static vehicle offers the key to a variety of V2X communication paradigms (as mentioned above), hence it is more sensitive. Therefore, the protocol design requires a two-fold interactive authentication paradigm based on challenge-response verification along with the anthropomorphic features that include human aspects into verification. The internal vehicle networks are supposed to provide a secure identifying gateway to these external devices. However, every transient connection between the peripheral device and the vehicle must be verifiable. 

The proposed solution verifies the driver authenticity via a three-way handshake that promises a two-fold authentication. Furthermore, the handshake scheme derives all subsequent messages with an initial round of authentication associated to an initiator. A secure mutual pairing between the vehicle and peripheral devices~\cite{kumar} analogs ad hoc device pairing such as on Bluetooth. However, the proposed solution avoids any unauthorized access to a vehicle by using cryptographic commitment schemes as a building block. The proposed scheme avoids any consequent privileges to maliciously start the engine of a parked vehicle through peripheral device access. Our motivation is to strengthen
an access control over a static/parked vehicle such that firstly, an owner must be authenticated
based on pre-defined challenge-response pairs (CRP), secondly, verification of actively measured
dynamics as an attribution of owners' characteristics.

\begin{figure*}[ht]
	\begin{framed}
		\B
		%	\begin{center}
		\centering
		\includegraphics[scale=0.4]{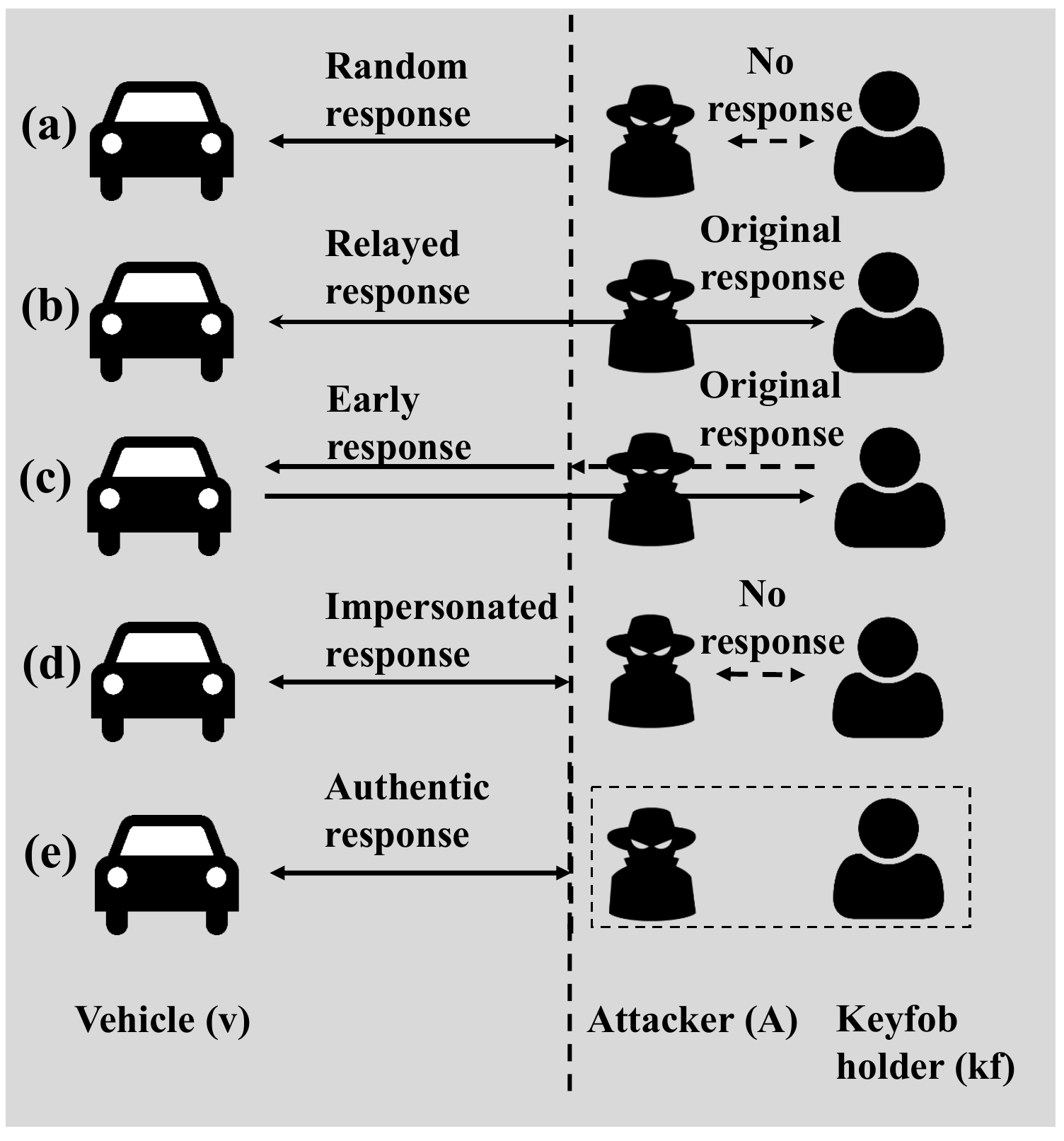}
		%\end{center}
		%\vspace{0.2cm}
		%	\BB
		
		\raggedright
		
		In (a) adversary relays a random response via brute force attack. In (b) adversary purely relays an authentic response from a keyfob holder ($\mathit{kf}$) towards the vehicle ($v$). It must be noted that response was originated at authentic $\mathit{kf}$, meaning that an interested $\mathit{kf}$ has requested for service access at $v$. According to \textit{distance fraud} in (c), a dishonest prover $\mathit{kf}$ might fake a larger distance between the $\mathit{kf}$ and $v$, as if $\mathit{kf}$ is situated closer to $v$ by sending a response too early. However, according to \textit{mafia fraud} (as shown in (d)) $\mathit{kf}$ might not be interested in any service request (hence, might be situated outside a vehicle coverage) still an adversary have successfully revealed the secret from $\mathit{kf}$ and might control services on behalf of $\mathit{kf}$. It must be noted that original $\mathit{kf}$ is not interested and have not revealed the secret by colluding with the adversary. In (e) adversary colludes with the authentic $\mathit{kf}$, thereby, yield a service access via response originated at $\mathit{kf}$.

	\end{framed}
	\caption{Attack scenario.}
	\label{fig:yammy}
\end{figure*}

The digital periphery (meaning physical as well as wireless signal periphery) of a vehicle must utilize reactive and proactive commitment verification towards an access grantee. In general, a vehicle is supposed to receive a request for safe pairing and a subsequent access to various internal vehicle modules, e.g., anti-lock braking system (ABS), powertrain control module (PCM), engine control unit (ECU), transmission control unit (TCU), tire pressure monitoring (TPM), active control module (ACM), relay control module (RCM), heat ventilation and air condition (HVAC) systems. Evidently, the security of these modules is related to the secure pairing with peripheral devices. However, the secure pairing is even more crucial when a peripheral device (e.g., keyfob) requests a remote access to the vehicle.

\subsection{Problem Statement} Figure 1 illustrates a problem scenario where an attacker can get an access to a vehicle. A required solution must avoid an unauthorized remote vehicle access via fabricated radio frequency identification (RFID) enabled keyfob. Also, we focus on finding ways to provide an anthropomorphic link to the bonding between the
vehicle and peripheral devices such as RFID enabled keyfob. Conventionally, keyless entry systems provide an autonomous\footnote{Note that the system settings are defined within the scope of autonomous vehicles, i.e., availability of IEEE 802.11p~\cite{zerop}, IEEE 1609.2~\cite{eded}, and, Black Box IEEE 1616~\cite{motor}.} sensing such that the parked vehicle keep sensing (through heartbeat messages) the presence of an authentic keyfob in the proximity, e.g., via regular beacon solicitation method. The authentic keyfob must be present in the proximity and respond back to these soliciting beacons from the parked vehicle. However, the absence of the authentic keyfob within the sensed region can be amplified with another RFID enabled keyfob. The malicious keyfob would create an illusion of the shorter distance by amplifying and relaying the signals between both parties. In addition, these RFID signals are vulnerable to other sophisticated attacks as detailed in~\cite{kelo} while assuming an adaptive adversary model. In particular, an adversary recovers an exhaustive number of CRP transcripts and based on that knowledge might fabricate a duplicate keyfob. %The attack scenarios presented in Figure~\ref{fig:yammy} are detailed in Section~\ref{sect:mod}.

%for recovering a finite number of challenge-response pairs and entirely fabricating a duplicate keyfob.

%The attack scenario might

%as a RFID enabled keyfob provides unique response to the vehicle and if and only if a correct response is fetched by the vehicle the key owner will get an access to the vehicle.

%However, considering a mighty adversarial environment these RFID signals can be exploited~\cite{kelo}. %In a way that the paired challenge-responses are recovered to fabricate a duplicate keyfob.

%According to the verifiable multilateration as given in~\cite{secpos} a secure positioning is feasible. A verifier estimates mobile node position through distance bounding approach, similarly multiple verifiers can estimate the node position. A trusted third party retrieves the position of these verifiers and the corresponding estimates to accurately compute an stable position of the mobile node through minimum mean square estimate. We provide a solution that compute multiple trilateration (for 2d positioning) measurements over a specified period of time, thereby, locating the mobile node at different time stamps. These located dotes over the trajectory of coordinates will yield an auxiliary estimation regarding the movement traits such as velocity and acceleration.

\subsection{Design Requirements} An authentication protocol construction must incorporate the verification of a pre-shared secret, an active response and a specific anthropomorphic feature. For example, the personalized locomotion pattern of a keyfob holder might learn an identifying information (more accurately with the passage of time). In particular, our design involves following factors and synergizes into a multi-dimensionally secure access control scheme. Essentially, design requirements can be summarized as:

\smallskip
\noindent\textit{Reciprocal authentication:} A primary requirement is to provide a mutual authentication between a vehicle and a peripheral device such as keyfob. In general, the vehicle to keyfob pairing is initiated from vehicle's side and keyfob as a responder. However, the vehicle as an initiator is more vulnerable to attack exposures as compared to the other way around. In our scheme, the keyfob is an initiator and the vehicle is a responder to validate a specific service access grantee such as an authentic keyfob. In this case, the vehicle authenticates the initiator and reciprocates a secret challenge along with the vehicle identity.
%There must be a mutual authentication between the vehicle and keyfob.

\smallskip
\noindent\textit{Identification based on pre-shared state:}
In our solution, an initial pairing is secured using an {\em internal state record} of the vehicle that is pro-actively synchronized with the recorded internal state inside the keyfob. The initial pairing must witness a matching internal state (inside the vehicle and an authentic keyfob) as a part of pre-shared knowledge verification phase.

\smallskip
\noindent\textit{Reactive verification:} The pro-active commitment (verification based on an internal vehicle state) must be coupled with a reactive commitment verification. This handshake ordering would avoid an attack scenario in which the adversary might respond with any random response to an authentic challenge. CRP based reactive verification avoids misbinding attacks and satisfies a \textit{non-injective} authentication property, i.e., to guarantee the participation of other party without the ability to distinguish it across multiple protocol executions. Thus, the vehicle must be able to verify the validity of the response, i.e., the response should be in correspondence with the current challenge.

\smallskip
\noindent\textit{Anthropomorphic features:} The personification of user traits (unique behavior or attribute of the keyfob owner) must be verified during the handshake. In particular, we need to verify velocity vs location with respect to the keyfob holder. Also, this locomotion pattern would become somewhat obvious and distinguishable over a period of time, i.e., any locomotion information collected over multiple authentication phases (between a specific vehicle and paired keyfob) would result into a personification of this locomotion pattern of the authentic owner. The human attribution of the owner's gait and active verification of the corresponding locomotion pattern is the classic form of authentication. In particular, this customization provides more intuitive authentication to an identical vehicle and keyfob over different sessions.

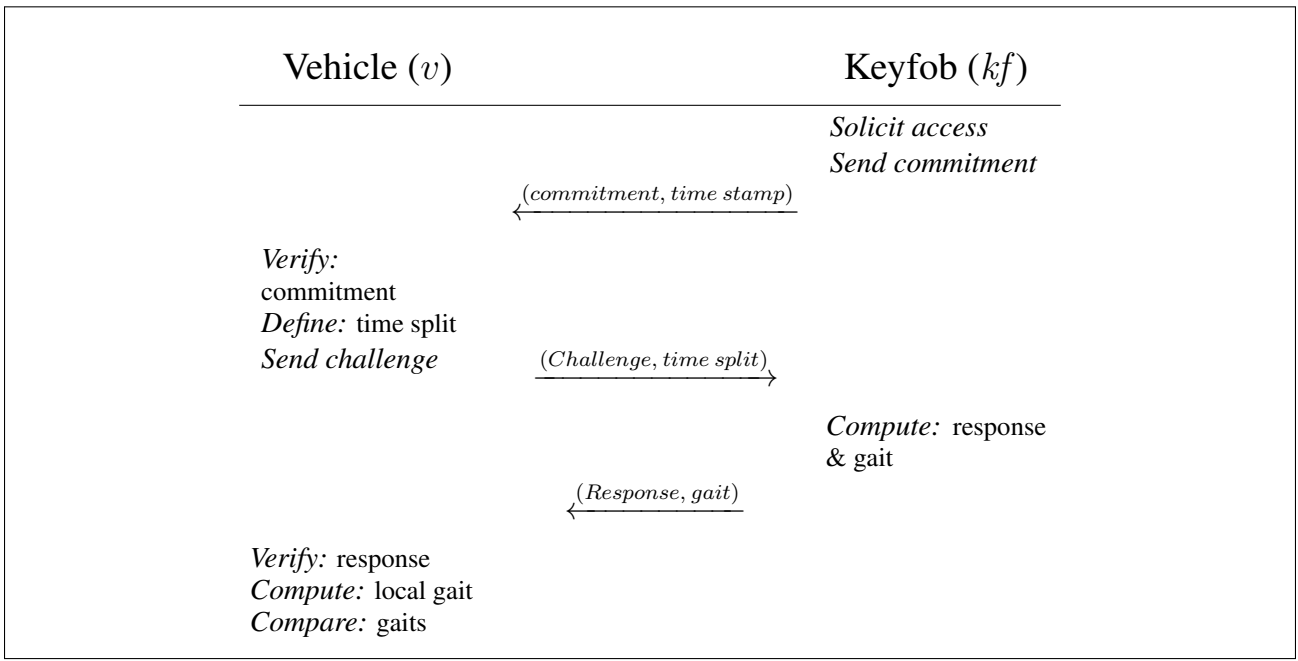
\begin{figure}[h]
	\captionsetup{belowskip=2pt,aboveskip=8pt}
		\begin{framed}%[0.6\textwidth]
	\centering
	%\scriptsize
	\begin{tikzpicture}%[show background rectangle,inner frame sep=.2cm]
	\matrix (m)[matrix of nodes, column  sep=0.5mm,row  sep=1mm, nodes={draw=none, anchor=center,text depth=0pt} ]{
		\large{Vehicle ($v$)} & & \large{Keyfob ($\mathit{kf}$)} \\[1mm]
		\hline
		& & \begin{minipage}{2.8cm}
		\small\textit{Solicit access}
		
		\small\textit{Send commitment}
		\end{minipage} \\ %[1mm]
		& $\xleftarrow{\small{(commitment, \: time \: stamp)}}$ & \\
		%& \begin{minipage}{2.5 cm}
		% $Cert_{R}||key_{r}$
		% \end{minipage} & \\[1mm]
		%&\begin{minipage}{3cm}    \end{minipage}  & \\ [-1mm]
		\begin{minipage}{2.8cm}
		\small\textit{Verify:} \footnotesize{commitment}
		
		\small\textit{Define:} \footnotesize{time split}
		
		\small\textit{Send challenge}
		%\textit{Challenge:} split time window for response and gait observation.
		\end{minipage} & \\ %[1mm]
		& $\xrightarrow{\small{(Challenge, \: time \: split)}}$ & \\
		%&\begin{minipage}{2.5 cm}
		% $Cert_{S}$
		% \end{minipage}  & \\
		& &  \begin{minipage}{2.9cm}
		\small\textit{Compute:} \footnotesize{response \& gait}
		\end{minipage} & \\ %[1mm]
		& $\xleftarrow{(\small{Response, \: gait})}$ & \\
		\begin{minipage}{3.1cm}
		\small\textit{Verify:} \footnotesize{response}
		
		\small\textit{Compute:} \footnotesize{local gait}
		
		\small\textit{Compare:} \footnotesize{gaits} \\
		
		\end{minipage} & \\ %[1mm]
	};
	
	%\draw[shorten <=-1cm,shorten >=-1cm] (m-1-1.south east)--(m-1-1.south west);
	%\draw[shorten <=-1cm,shorten >=-1cm] (m-1-3.south east)--(m-1-3.south west);
	%\draw[shorten <=-1cm,shorten >=-1cm] (m-1-5.south east)--(m-1-5.south west);
	%\draw[shorten <=-1mm,shorten >=-1mm,-latex] (m-3-2.south east)--(m-3-2.south west);
	%\draw[shorten <=-1mm,shorten >=-1mm,-latex] (m-5-2.south west)--(m-5-2.south east);
	%\draw[shorten <=-1mm,shorten >=-1mm,-latex] (m-6-2.south east)--(m-6-2.south west);
	\end{tikzpicture}
		\end{framed}
	\caption{The proposed approach.}
	\BBB
	\label{fig:pap}
\end{figure}

%\medskip\medskip\noindent\textbf{Our contribution.} In order to mitigate this coalition attack scenario as mentioned in problem-statement and detailed in Section~\ref{sect:coal}, we plan to utilize PUF devices for a \textit{non-forwardable} message authentication that provides:
%
%%\begin{itemize}[nolistsep,noitemsep]
%\noindent$\bullet$ \textit{Unique identification:} Vehicles create a visual binding over optical channel through PUF. The physical challenge stimulus $c$ is processed over an authentic PUF and spontaneously produces a correspondingly original response $r$. Therefore, a communicating vehicle can be uniquely identified via PUF verification.

\subsection{Our Contribution} Figure~\ref{fig:pap} presents a generalized vision of our proposed vehicle to device authentication. It must be noted that the first round is commitment revealing round, i.e., the residual commitment (that a vehicle and device were left with) during last communication is revealed and verified here at the first round of next connection establishment in near future. Thereby, a secure chain is created between the \textit{last} and the first round of connection establishment. In addition, the proposed solution is based on the following properties.

%We provide a solution (see Figure~\ref{fig:pap}) with:

\smallskip\noindent\textit{Knowledge based initial pairing:} A most recent internal state, e.g., electronic control unit's (ECU) configuration, seat position, the angle of steering, temperature, etc, of any vehicle, is known only to the keyfob in current use. Therefore, every time a vehicle changes its internal state, the corresponding commitment data is also changed. In particular, only a key that was used during the last drive would be in possession of correctly matching internal state of the vehicle.

\smallskip\noindent\textit{Commitment based authentication:} A reactively changing secret commitment between the vehicle and authentic keyfob provides spontaneous identity verification. It provides the identity verification of the peer party in communication during the last rounds of the handshake. Furthermore, we have shown a secure adaptation of our proposed solution that utilizes existing commitment schemes, i.e., Schnorrs' identity-based commitment scheme~\cite{beryyinluv} and Pedersen's commitment scheme~\cite{chour} (as detailed in Section~\ref{sect:peddepi}).

\smallskip\noindent\textit{Personified localization:} An instant gait verification is done with respect to a keyfob holder approaching towards the paired vehicle. Evidently, a direct communication between the paired vehicle and corresponding keyfob would yield the same gait observation, i.e., similar distance covered in similar time window; as opposed to an attack scenario in which the adversary relay the wireless radio messages. In particular, no additional hardware integration is required within the internal vehicle network. The gait observation at the static vehicle would differ with the original gait observation at the keyfob holder if an active relaying is involved in the wireless radio communication.

%used for the keyfob authentication is confidential and secure from any external tampering.
%\noindent\textbf{In-vehicle networks.} The utility of CAN, LIN, MOST networks in association with EDRs mentioned below.

%Consequently, this security policy does not differentiate the EDR component any superior than the ECU's over the internal network.

%data security within the vehicle ECUs or intra-vehicle communication networks. \BB %Add a summary of safety effects of EDR fitment in published literature.

%\begin{figure}[t]
%\centering
%\includegraphics[width=0.9\linewidth]{key.pdf}
%\vspace{0.1cm}
%\caption{Remote key authentication}
%\BBB\BBB
%\label{fig:auth}
%\end{figure}

\medskip\noindent\textbf{Outline.} Section~\ref{sect:mod} details the authentication model and attack scenarios regarding two-party authenticated communication. Section~\ref{sect:zio} provides system setting and assumptions. Furthermore, Section~\ref{sect:ano} provides the generalized peripheral authentication scheme as well as the personalized authentication scheme for vehicle to keyfob interaction. The ability to adapt potential commitment protocols from the existing literature is given in Section~\ref{sect:peddepi}. Furthermore, a formal security analysis is given in Section~\ref{sect:crpt}. Section~\ref{sect:prev} and Section~\ref{sect:fut} provide related work and concluding remarks, respectively.

%\end{itemize}

%\noindent\textbf{Integrity aware EDRs.}
%Therefore, the EDR integrity assures a robust keyfob authentication and crash optimization.

%\medskip

%\textbf[Figure on authentication types]

\section{Authentication Properties and Threat Model}\label{sect:mod} In our model, we consider three entities: a verifier or a static vehicle ($v$), a prover or a keyfob ($\mathit{kf}$), and an attacker ($\mathcal{A}$). We consider a hierarchy of authentication properties such as aliveness, weak agreement, non-injective agreement, and agreement, to be a milestone for our proposed solution~\cite{hierarchy}. Our solutions, i.e., vehicle to peripheral device authentication and vehicle to keyfob authentication, satisfy authentication properties \textit{injective authentication} or \textit{agreement} as defined below.

%------------------------------------------------------------------------------------------------------------------------------------------------------------
%\begin{itemize}%[nolistsep,noitemsep]
\smallskip
\noindent \textit{Aliveness.} The most basic layer of authentication, i.e., aliveness property, ensures the participation of peer entities, while it might so happen that a participating entity was not running the protocol or even worse that it was running the protocol with some other entity.

\begin{definition}[Aliveness]
	A verifier $v$ verifies that a prover $\mathit{kf}$ has been participating in the execution of an authentication protocol with $v$.
\end{definition}

For example, a mirror attack scenario or an adversary using an old round of the protocol execution as an oracle with the initiator illustrates the lack of proper authentication over and beyond aliveness.

\smallskip
\noindent \textit{Weak agreement.} A step beyond aliveness is to guarantee the participation of an entity while the protocol execution actually took place between those claimed entities.

\begin{definition}[Weak agreement]
	A verifier $v$ verifies that a prover $\mathit{kf}$ has been participating in the execution of an authentication protocol with $v$, while protocol execution took place between $v$ and apparently from $\mathit{kf}$.
\end{definition}

\noindent \textit{Non-injective agreement.} This property of authentication ensures that each of the participants is present during the protocol execution and agrees on messages exchanged during the run. However, there might be a one-to-many association between the participants such that $v$ might be present during multiple executions while $\mathit{kf}$ was present in at most one of those executions.

\begin{definition}[Non-injective agreement]
	A verifier $v$ initiates protocol execution with a prover $\mathit{kf}$ and agrees on messages exchanged during the $n$ protocol executions, certainly, then prover $\mathit{kf}$ has been responding (hence agreeing on messages) in at least ($n-1$) protocol executions.
\end{definition}

\noindent \textit{Agreement.} One-to-one association between any two participants is termed as agreement or mutual authentication property. Thus, each of the participants agrees on their corresponding identity as well as the messages exchanged during \textit{each} protocol run.

\begin{definition}[Agreement]
	A verifier $v$ verifies that a prover $\mathit{kf}$ has been participating in the execution of an authentication protocol with $v$, while a mutual and affirmative protocol execution took place between $v$ and $\mathit{kf}$.
\end{definition}

%-------------------------------------------------------------------------------------------------------------------------------------------------------------
Note that weak agreement and agreement properties are different; in the former, during the protocol run one party only guarantee the participation of other party; in later, one party guarantee the participation of other party and also preserves the distinctness corresponding to each protocol run. 

We present three different (but related) relay and impersonation attack scenarios, namely, distance fraud, mafia fraud, and terrorist fraud. These attack scenarios are applicable to various systems based on service access verification, especially, the services that are accessible over a wireless radio channel within a closer proximity. Therefore, as per the design requirements, the verifying vehicle $v$ must be able to guarantee: aliveness (i.e., the access request was actually generated at the original keyfob but does not avoid replay attack), weak agreement (i.e., the access request was actually generated at the authentic keyfob instead of an adversary replaying the past requests), non-injective agreement (i.e., the access request was actually generated at the authentic keyfob without preserving the distinctness (relay attack) across multiple executions of the protocol), agreement (i.e., the access request was actually generated at the authentic keyfob while preserving the distinctness and avoiding the replay attack).

\smallskip
\noindent\textbf{Distance fraud.} There are various services that are meant to verify the presence of a user in the locality before granting access to the resources. According to the distance fraud, a dishonest prover claims to be at a certain distance (thereby, legal to access the services) while actually being far away from the claimed distance with respect to the verifier. The dishonest prover pretends to fake a larger distance by sending a response too early to the verifier, i.e., even before the challenge reaching the prover and the prover computing a paired response and then sending a response to the verifier. For example, suppose that a vehicle and keyfob are situated apart (might be in the range of each other). An adversary might just amplify and relay these signals from the authentic keyfob pretending to acquire an access, or at worst, an adversary might send an early response to the vehicle (earlier than keyfob). In this case, an adversary would receive at least an early access, in case the original keyfob has actually requested for vehicle access. Therefore, both the proactive and reactive secret verification (with additional locomotion verification for the keyfob authentication) are crucial to the proposed authentication scheme. The authentication protocols that satisfy the only \textit{aliveness} property are usually prone to distance fraud. Therefore, the protocol must satisfy, at least, a \textit{weak agreement} property to avoid distance fraud attacks.

\smallskip
\noindent\textbf{Mafia fraud.} This attack is another more sophisticated form of distance attack. In this attack scenario, two adversaries collude and gain an illegitimate access to the secure services, while the original sender has no intention to request an access or to reveal a cryptographic secret. According to the mafia fraud, an adversary tries to utilize a separate channel and an accomplice to extract and relay the credentials from an authentic prover (not actively interested at all, in any service access). Therefore, adversaries collude and relay the secure access code to break in the verifier (meant to provide the service access to the authentic secret holder). The authentication protocols that satisfy only, \textit{aliveness} and \textit{weak agreement} properties are usually prone to mafia fraud. Therefore, the protocol must satisfy, at least, a \textit{non-injective agreement} property to avoid mafia fraud attacks.

\smallskip
\noindent\textbf{Terrorist fraud.} According to the terrorist fraud, an authentic prover assists with the adversary (by handing over a secret component) to impersonate in front of the verifier. The attack scenario is practically feasible even with the biometric authentication because the original secret holder can authorize himself, and, let the services be accessible to others who do not possess secret biometric credentials. A secure protocol design requires a more sophisticated identity verification method such as a ticket granting authority (issuing tickets for service access). It must be noticed that our solution is not resistant to this type of impersonation attacks. In general, authentication protocols that satisfy the most concrete form of authentication, i.e., \textit{injective agreement} property might still be prone to these attacks. A most resembling example is when any vehicle (that is secured under insurance policy of the owner) is stolen by a thief, if only, had the owner \textit{subliminally} assisted to thief.

%are usually prone to mafia fraud.

%\medskip\noindent\textbf{Notations.}
%Notations are given in Table.

%\begin{table}[t]
%	\centering
%            \begin{tabular}{ | p{2.8 cm} | p{5 cm} | c | c |}  %p{1.8cm} |
%            \hline
%               \textbf{Events} & \textbf{Recorded data IEEE 1616~\cite{motor}}  \\ \hline  %\textbf{Credentials} &
%               \scriptsize{Location} & \scriptsize{(x,y) coordinates}   \\ \hline  % eCK+CDH+RO
%               \scriptsize{Time stamps} & \scriptsize{paired with (x,y)}   \\ \hline  % eCK+GDH+RO
%               \scriptsize{Deceleration} & \scriptsize{velocity decrease ratio}  \\ \hline
%               \scriptsize{Acceleration} & \scriptsize{velocity increase ratio}   \\ \hline
%               \scriptsize{Yawing} & \scriptsize{steering angle}  \\ \hline  % eCK+GDH+RO
%               \scriptsize{Seat position} & \scriptsize{passenger posture} \\ \hline
%               \scriptsize{Airbag deployment} & \scriptsize{activation time}  \\ \hline
%               \scriptsize{Unusual events} & \scriptsize{breaking above threshold} \\ \hline
%            \end{tabular}
%	\caption{Event triggered data.}
%\BBB\BB
%	\label{tab:analyze}
%	\end{table}

Now we briefly enlist the solutions that might offer a secure communication link between a static vehicle and a dynamic keyfob at distance. However, none of these solutions incorporate the binding between the localization of a keyfob and the verification of authentic keyfob holder, as we do in our proposed approach. 

%In addition with a frequently changing Challenge Response Pairing (CRP) there must be proximity check to avoid any distance frauds~\cite{dbound}.

\medskip\medskip\noindent\textit{Strawman solutions:} In theory, non-transferrable machine-learning-fingerprints are also interesting, e.g., Completely Automated Public Turing Test to tell Computers and Humans Apart (CAPTCHA). More importantly, the CAPTCHA challenge should be in the form of unknown secret challenge while the response is examined by probabilistic estimations to decide the correct user identity binding with respect to that response. The CAPTCHA is based on reactive response verification to check the human capabilities of computing a correct response immediately. Interestingly, the reactive verification and anthropomorphic feature (e.g., walking pattern or gait observation, visual traits, typing patterns) can also be combined using CAPTCHA method of verification. The CAPTCHA provides identification of a person/device through a custom reaction to the CAPTCHA in order to avoid replay attacks, whereas the machine learning is trained to identify the person/device based on upcoming never seen CAPTCHAs. Conventionally, a new CAPTCHA is produced as a challenge for every user in a new interaction. This CAPTCHA challenge might invoke a user response not only in terms of the regular optical character recognition (OCR) but also in terms of counting the number of user movements per unit time, voice commands, or visual reflections, etc. All of these user-oriented commands qualify to be an anthropomorphic feature and can be used for verification purpose. However, there is no distinction between two different users from CAPTCHA perspective. Our protocol is able to distinguish two different users based on anthropomorphic attributes.

The existing solutions provide repetitive challenge-response iterations, triangulation and multi-alteration, and location mapping via RFID tag. Furthermore, the solution designs might vary as per the security features required and a few of those strawman solutions have been expanded as below:

\medskip\noindent\textit{Wireless fingerprinting:} The remote physical \textit{wireless device fingerprinting} as proposed in~\cite{beknown} is a useful solution to securely identify any remote device. According to the wireless device fingerprinting~\cite{baner}, a received signal is distinguished on the basis of four timing functions, i.e., ideal information encoded, unique manufacturing impairment in device hardware, environmental noise, and propagation delay.

$$\mathit{Received}(t)=\mathit{Ideal}(t)+\mathit{Impairment}(t)+\mathit{Noise}(t)+\mathit{Propogation}(t)$$

In~\cite{beknown} the remote device fingerprinting is based on \textit{clock skew} as observed through Transport Control Protocol (TCP) timestamps or Internet Control Message Protocol (ICMP) timestamps~\cite{skave} as embedded inside the header. The advantage is that their clock skew measurement can even be used with the Network Address Translation (NAT) protected networks which provides masking a single Internet Protocol (IP) address for multiple unique IP hosts. 

$$\mathit{Received}(t)=\mathit{Ideal}(t)+\mathit{Anthropomorphic}(t)$$

In addition, wireless channel and device fingerprinting can be used together to eliminate \textit{repeater-in-the-middle} attacks; consequently, nullify any artificial delays during the communication. Usually, in order to distort these fingerprinting signature, an attacker performs tampering such that $\mathit{Impairment}(t)\approx \mathit{Noise}(t)$. On the contrary, our proposed solution is different such that the received signal is distinguished based on two timing functions, i.e., ideal encoded information and user-oriented anthropomorphic features.

\medskip\noindent\textit{Measuring movement pattern:} %IncludiTherefore, ng direction, velocity and acceleration extending
Next, we discuss localization methods to securely identify a moving keyfob. The verifiable multi-lateration location technique~\cite{secpos} proposed a secure positioning scheme. Accordingly, a verifier estimates mobile node position through the distance bounding approach; similarly, multiple verifiers located in a triangle around the node can estimate the actual node position. We suggest using a reflection from passive ``verifiers'' (possibly, reflecting devices located at home or road-side lamps) for a proactive verification of any static vehicle. A trusted third party retrieves the position of these verifiers and their corresponding estimates (regarding mobile node) to accurately compute the mobile node location through \textit{minimum mean square} estimate~\cite{secpos}. However, our proposed solution would work even in the absence of these verifiers (that might not be trustworthy at times). In addition, we avoid the method of {\em multiple} measurements of round trip time (as in case of distance bounding) to locate the sender within the proximity.

The major difference in our proposed solution as compared with other solutions is that both a proactive and reactive commitment is verified while referring to similar time windows. The time window estimation (simultaneously, at both parties in communication) results in locomotion trajectory of the keyfob holder. Our solution might further be extended for computing multiple triangulation (for 2d positioning, 3rd positioning, and in fact the $i$'th) measurements over a specified period of time, thereby, locating the mobile node at different time stamps. These located dotes over the trajectory of coordinates will yield the direction of movement, velocity, and acceleration of the node. Therefore, a verifier can identify a node, by comparing its past movement pattern, say, using machine learning, and may predict the next location of the node as well (e.g., for avoiding collisions).

\section{System Settings and Protocol Overview}\label{sect:zio} This section presents system and hardware assumptions. Firstly, the vehicles must be compliant with the dedicated short range communication (DSRC) IEEE 1609 based on wireless access in vehicular environment (WAVE) 802.11p. In addition, a wake-up mode is required to allow a far distanced keyfob to reach the static vehicle in sleep mode. Also, a non-volatile storage medium is required that would securely accumulate the most recent driving patterns and the vehicle configurations. Next, a detailed working description of IEEE 1616 based data recorders is presented that serves the proactive knowledge verification between static a vehicle and a keyfob. Also, a brief overview of the proposed solution along with the participating entities and their pre-stored knowledge is highlighted.

\subsection{IEEE 1616 event data recorders (EDRs):} EDRs are used to maintain event primitives in a log. These essential event factors ($e$) contribute to improve consequent safety events~\cite{motor,mvdr} in future. Subsequently, any forensic investigations against a crash event would extract and link these event records about the vehicle. However, the privileges regarding the access to this critical/confidential information of any vehicle depend on the state law. It may further be of utmost importance to forensic team investigation. In past, consumers were not aware that an EDR is integrated inside the vehicle. However, the current state laws, i.e., Black Box Privacy Protection act 2013~\cite{act,meets}, made the EDR ownership clear to a vehicle owner and that it cannot be accessed without the consent from the vehicle owner on which it is integrated into (1) the presence and location of an EDR (also termed as a black box), (2) refining the critical event information and storage format, and (3) usage and claims to acquire the recorded internal state data for legal proceedings with owner's consent. We emphasize that the crucial event record stored on a volatile memory inside ECUs can be used to authorize the original vehicle owner of the vehicle.

\begin{definition}[EDR mobility pattern]
	\textit{The last most recent time frame of the vehicle dynamics during the last itinerary available from a non-volatile EDR storage defines the vehicle mobility pattern. A static vehicle and the corresponding keyfob share this mobility trace as a common internal state of the vehicle.}
\end{definition}

\smallskip
\noindent\textit{Record retrieval process:} IEEE 1616a~\cite{motor,mvdr} defines event observation and recording process. The event primitives such as acceleration, deceleration, steering angle/movement, velocity, and seat position, accounts for driving decisions taken during a crash incident. The record extraction is feasible as one of the connections given below:

\begin{itemize}
	\item A serial data link communication path to the {\em vehicle diagnostic communication port} that is on-board diagnostic (OBD-II) or SAE J1962~\cite{portouni}.
	
	\item A serial data link communication path connected via a direct cable to the target ECU (a partial set of events).
	
	\item A direct cable to the EEPROM component on the printed circuit board (PCB) within the EDR assembly. This method is more vulnerable to security attacks because it overrides the security barriers for module access control.
\end{itemize}

%the raw binary data in Hexadecimal notation format. This process, while far more tedious and time consuming than Methods 1 and 2, overcomes the serial security access controls (by avoiding it entirely), and also avoids the problem of test induced data changes.

SAE J1962 port defines the requirements of an OBD connector and is technically equivalent to ISO/DIS 15031-3:2001. While motor vehicle event data recorder connector lockout apparatus (MVEDRCLA) is an amendment to IEEE 1616. It defines a device configuration to physically secure OBD II connector, primarily used to access the EDR (and other vehicle systems) such that it prevents tampering and unauthorized access to critical data. Surprisingly, there are only a few security integration protocols~\cite{meets} to enable data confidentiality, integrity, and availability regarding EDR components. EDR is deployed as a black box component whereas the input-output data security is left on to the vehicle vendor. A secure EDR integration to the internal vehicle network is assumed to be part of security features used for internal network components itself.

%\medskip\noindent\textbf{Assumptions and definitions.}

\subsection{Overview}
Our scheme provides a solution against the distance and mafia fraud attacks. Moreover, the terrorist fraud scenario requires additional assumptions, i.e., biometrics and consumer-friendly hardware to verify biometric features. A pre-processing phase and the subsequent usage of cryptographic primitives is given as below:

\smallskip
\noindent \textit{Setup phase (Key generation):} The manufacturing authority initializes a security module, $\mathit{Init(\mathit{Auth})}$ with a secret symmetric key $K$ (for both vehicle and keyfob) before even handing it over to the consumer.

\smallskip
\noindent \textit{Registration phase (Binding a symmetric key with the user identity):} The next phase is to bind a vehicle to keyfob at the time of handing it (a vehicle and a keyfob) over to a specific consumer $u$. Accordingly, the registration phase is required to associate the pre-initialized symmetric key, $K$, in the setup phase, $\mathit{Init(\mathit{Auth})}$, with the keyfob and the user, i.e., $\mathit(K,\mathit{kf},v,u)$: binding a user $u$ with a vehicle $v$, a keyfob $\mathit{kf}$, and the symmetric key $K$. It must be noticed that the user identity $u$ is crucial for the initial binding such as creating an administrator account. Initially, event records are null and do not provide a linkage between any keyfob holder, as in the past and in the current.

%but the 

\smallskip
\noindent \textit{Query phase (Attempt to attack):} In the query phase, an adversary $\mathcal{A}$ utilizes the knowledge of the symmetric key $K$ (from the $n$ number of transcripts, say $t_n$, extracted during $n$ sessions in past) and performs the following sequence of message exchange:

\begin{itemize}%[noitemsep,nolistsep,leftmargin=0.1cm]
	\item The prover keyfob $\mathit{kf}$ sends $\mathit(K,\mathit{kf},v,u)$ to a verifier vehicle $v$ requesting for an access permission.
	
	\item An adversary $\mathcal{A}$ retrieves $(K,\mathit{kf},v,u)$ and relays $(K,\mathit{kf}_{\mathit{adv}},v,u)$ to the verifying vehicle $v$.
	
	\item The verifying vehicle $v$ verifies $(K,\mathit{kf}_{\mathit{adv}},v,u)$ before granting any access permission, i.e., $\mathit{Check} \leftarrow (K,\mathit{kf}_{\mathit{adv}},v,u)$.
\end{itemize}

Next, we present an authentication game to define the adversary advantage over the proposed scheme $\mathit{Auth}$. The security game for the proposed authentication scheme $\mathit{Auth}$, adversary $\mathcal{A}$, prover $\mathit{kf}$ and an authentic verifier $v$ is given as below:

\begin{proposition} An adversary $\mathcal{A}$ wins the game if $\mathit{[Check=1]}$. The probabilistic advantage of adversary, $\mathit{Adv(\mathcal{A})}$, for winning the game is
\end{proposition}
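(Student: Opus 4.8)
The plan is to formalize the security game sketched in the query phase as an interaction in which $\mathcal{A}$, after observing the $n$ transcripts $t_n$ and the key material derived from them, attempts to drive the verifier $v$ to output $\mathit{Check}=1$. I would first define $\mathit{Adv}(\mathcal{A})$ as the probability that $v$ accepts the relayed tuple $(K,\mathit{kf}_{\mathit{adv}},v,u)$ in the attack phase, measured against the honest baseline in which an authentic $\mathit{kf}$ makes $v$ accept with probability one. Since the acceptance predicate $\mathit{Check}$ is the conjunction of the three verifications that $v$ performs in Figure~\ref{fig:pap} --- the proactive commitment check, the reactive response check, and the gait comparison --- I would write the winning event as the intersection of these three sub-events and bound each factor separately, so that the overall advantage is dominated by the weakest link from the adversary's point of view.

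Next I would isolate the cryptographic part from the anthropomorphic part. Because the query phase explicitly grants $\mathcal{A}$ the transcript-derived symmetric material, I would not attempt to salvage security from the commitment or response checks alone; instead I would invoke the binding property of the Schnorr and Pedersen commitments from Section~\ref{sect:peddepi} to argue that $\mathcal{A}$ cannot open the first-round commitment to any value other than the genuine one, which forces $\mathcal{A}$ into a pure relay strategy rather than a fabrication strategy. Under this reduction the commitment and response factors contribute at most a term negligible in the commitment security parameter, and $\mathit{Adv}(\mathcal{A})$ collapses to the probability that the relay strategy additionally passes the gait comparison.

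The decisive step is bounding this gait-matching probability. Here I would model the time-split verification quantitatively: any relay inserted between the true $\mathit{kf}$ and $v$ injects an additional round-trip delay $\tau$ into the observed timing, so the distance-per-time-window that $v$ reconstructs locally differs from the value computed at $\mathit{kf}$ by an amount that grows with $\tau$. I would argue that $v$ outputs $\mathit{Check}=1$ only when this discrepancy lies inside the verification tolerance $\epsilon$, whence $\mathit{Adv}(\mathcal{A})\le\Pr[\,|\mathit{gait}_v-\mathit{gait}_{\mathit{kf}}|\le\epsilon\,]$, a quantity that can be driven to be negligible by choosing the time-split granularity small relative to the unavoidable relay delay. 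Combining this with the commitment term yields the claimed closed form for $\mathit{Adv}(\mathcal{A})$, tying back to the distance- and mafia-fraud resistance promised in Section~\ref{sect:crpt}.

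I expect the main obstacle to be this last step, because it is a physical-layer argument rather than a clean cryptographic reduction: I must pin down a precise probabilistic model of the locomotion samples and of the relay-induced delay, and then prove that no relaying adversary can simultaneously reproduce the timing and the movement trajectory within tolerance. Formalizing $\mathit{gait}_v$ and $\mathit{gait}_{\mathit{kf}}$ as random variables whose difference is provably lower-bounded under every relay --- rather than merely arguing it informally from the round-trip delay --- is where the real effort lies; by comparison the commitment-binding and response-guessing bounds are routine.
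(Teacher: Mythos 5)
The statement you are trying to prove is not actually a theorem with content to establish: in the paper, Proposition~1 is a \emph{definition}. It sets up the security game (win $\Leftrightarrow$ $\mathit{Check}=1$) and then \emph{defines} the adversary's advantage as $\mathit{Adv}(\mathcal{A})=\Pr[\mathit{Check}=1]$, immediately followed by the security requirement that the scheme is secure if this quantity is negligible. The paper gives no proof here, because there is nothing to prove --- the equality is the definition of the advantage. Your proposal instead sketches a proof of a much stronger, different statement: that $\mathit{Adv}(\mathcal{A})$ \emph{is} negligible, i.e., a full security theorem. That material belongs to (and partially duplicates) the paper's Section~\ref{sect:crpt}, where the relay-delay argument is carried out via Assumption~\ref{asm:van}, Assumption~\ref{asm:get}, Propositions~\ref{prp:perp} and~\ref{prp:tor}, and Theorem~\ref{them:firs} --- not to this proposition.

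Even read as a sketch of that later security analysis, your argument has concrete gaps relative to what the paper can support. First, invoking the binding property of the Schnorr and Pedersen commitments is misplaced: in the game of the query phase the scheme is symmetric-key (shared $K$ plus hashed event data); Section~\ref{sect:peddepi} presents those commitment schemes only as optional adaptations, so a reduction to their binding property does not apply to the scheme the game is about. Second, your claim that the advantage ``can be driven to be negligible'' clashes with the paper's own remark immediately after the proposition, namely that $\Pr[\mathit{Check}=1]$ is \emph{maximum} during the initial rounds of pairing when the event log is almost null --- the proactive check gives no protection then, so no uniform negligible bound holds. Third, the decisive gait-matching bound you propose to prove is precisely what the paper does \emph{not} prove: it is assumed (Assumptions~\ref{asm:van} and~\ref{asm:get} postulate that the relay delay $t_{\mathit{relay}}$ exceeds the tolerance $t_{\epsilon}$ and cannot be masked), and the quantitative advantage expressions in Section~\ref{sect:crpt} are derived conditionally on those assumptions. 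So the ``real effort'' you correctly identify at the end of your proposal is an open modeling gap in the paper itself, not a step you can discharge from what is given.
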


$$\mathit{Adv(\mathcal{A})}=Pr[\mathit{Check}=1]$$

Specifically, the vehicle to keyfob authentication scheme $\mathit{Auth}$ is secure if the $\mathit{Adv(\mathcal{A})}$ is negligible. In addition, $Pr[\mathit{Check}=1]$ is maximum during the initial rounds of pairing when event log is almost null. 
%Therefore, a user identity is used for initial pairing as long as the event data is not populated enough.

%at least for the different rounds of vehicle to keyfob authentication or...
%authentic pairing the vehicle to keyfob...

%$\textit{Init(\mathcal{A})}$

\begin{figure}[h!]
	\begin{center}
		\includegraphics[scale=0.4]{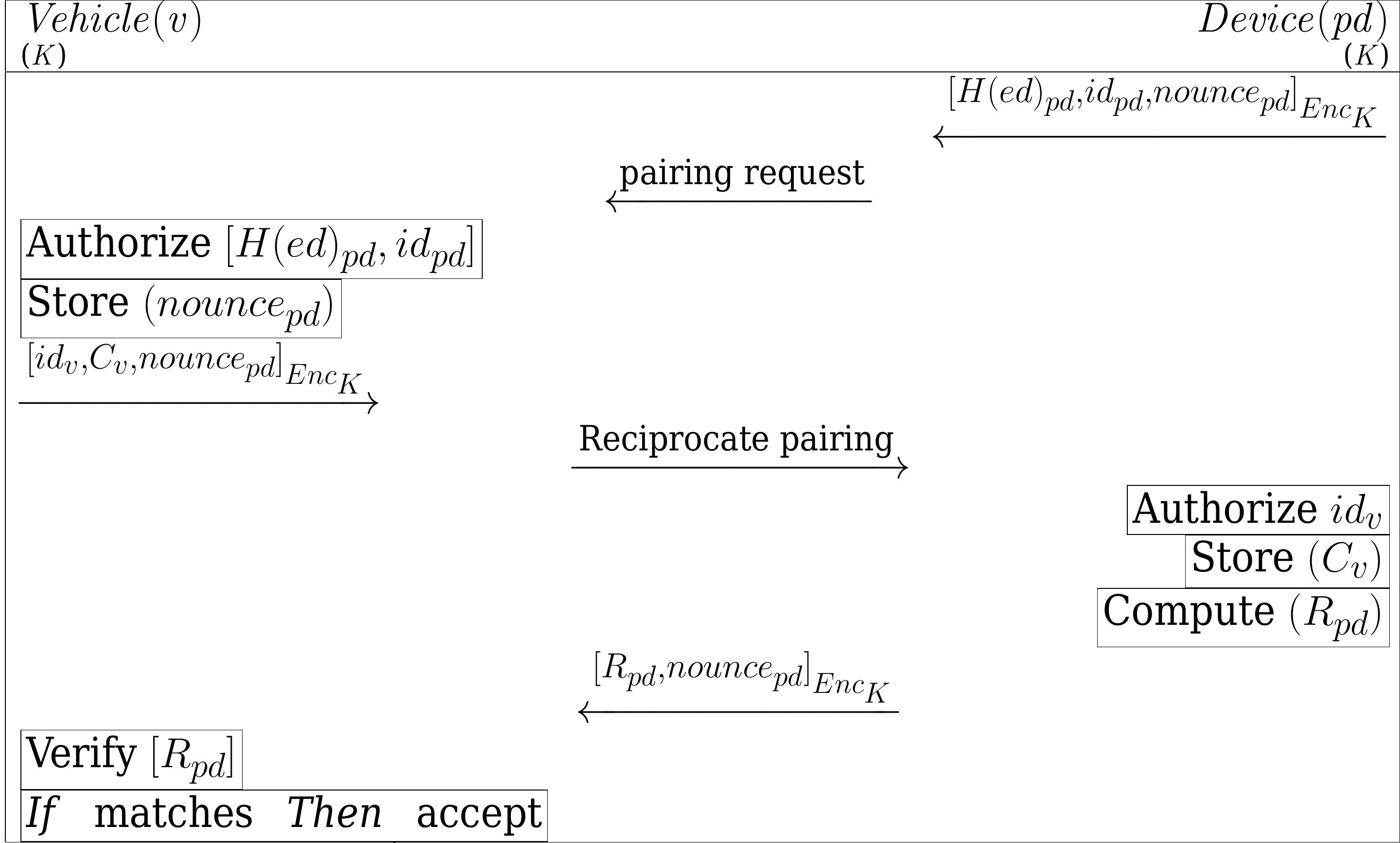}
	\end{center}
	\vspace{0.2cm}
	\BBB
	\caption{Peripheral authentication scheme.}
	\label{fig:alia}
\end{figure}

\section{Proposed Scheme}\label{sect:ano} We propose an authentication scheme (Figure~\ref{fig:alia}) for a remote vehicle access control that would authenticate the original keyfob of the vehicle and then give access to the paired vehicle. This access control can be perceived as an authorization check for services within the periphery of a vehicle. The authentication begins with an initial verification based on the event log replication. The event log is overwritten with every new event occurrence and keeps recording only recent few seconds of vehicle dynamics. Concurrently, vehicle event logs can be replicated over the peripheral device. Therefore, a peripheral device in possession of the most recent event logs would be authorized as a most recent occupant.

First, we present a simple authentication approach for peripheral devices, e.g., cell phone, USB stick, iPod, laptop, and other Bluetooth devices. Later, we propose a customized authentication scheme (in Figure~\ref{fig:gen}) based on the simple authentication scheme (as in Figure~\ref{fig:alia}).

%\medskip\noindent\textbf{Peripheral authentication scheme.} 
\subsection{Peripheral Authentication Scheme: Basic Scheme}
%requires additional verification, in case a peripheral device requests a remote access such as a keyfob.
As detailed in Figure~\ref{fig:alia}, a vehicle receives the pairing request from a peripheral device. The request begins with an encrypted identification information from a specific peripheral device $pd$ that sends the hash of shared internal state $H(ed)$, identity $id_{pd}$, and a sequence number $\mathit{nounce}_{pd}$. The event information is securely hashed in an abstract form. The vehicle and device share a symmetric key $K$ (as detailed in setup and registration phase), and all handshake messages are encrypted with $K$. Next, the vehicle verifies the pairing request as: (\textit{i}) Is the requesting device storing the similar internal state as $H(ed)_{v}=H(ed)_{pd}$ (i.e., the device has been used during most recent drive in past)? (\textit{ii}) Is the requesting device possessing an authentic identity as $pd_{id}$? (\textit{iii}) Is the requesting device using a unique nounce as $\mathit{nounce}_{pd}$? In the second step, after the successful verification of the initial commitment, the vehicle sends a challenge $C_v$ with the vehicle identity $id_{v}$, and the $\mathit{nounce}_{pd}$, all are encrypted using the key $K$. The device authenticates the reciprocated values, i.e., $id_v$, and paired $\mathit{nounce}_{pd}$. In addition, the device produces and sends the corresponding response $R_{pd}$ with $\mathit{nounce}_{pd}$. The device inputs the received challenge $C_v$ to the hash function $H$ and generates an output response as $H(C_v)$=$R_{pd}$. The verifying vehicle checks the reactive response $R_{pd}$, if the response match is found, then the vehicle accepts the device pairing request and grants the access.     
% if verification successful, then accepts the device pairing.

This authentication scheme might be vulnerable to distance attacks in case of a remote peripheral access, e.g., a remote keyfob to vehicle pairing. Therefore, we now present a customized scheme for vehicle to keyfob authentication.

%additional verification in last step of vehicle to keyfob handshake.

%\medskip
%\noindent\textbf{Assumptions and Settings.}
%
%\noindent$\bullet$ The vehicle $v$ is fetched with a secret key $SK_v$, public key $PK_v$ and a secret key $SK_{kf}$ of the corresponding keyfob $kf$, during the preprocessing stage.
%
%\noindent$\bullet$ The corresponding keyfob $kf$ is fetched with a secret key $SK_{kf}$, public key $PK_{kf}$ and secret key $SK_v$ of the corresponding vehicle $v$, during the preprocessing stage.

%\noindent$\bullet$

%\medskip\noindent\textbf{Vehicle to keyfob authentication scheme.} 
\subsection{Vehicle to Keyfob Authentication Scheme: Customized Scheme}
The authentication scheme for a remote vehicle access is more crucial than a local peripheral device access. In our approach, (as detailed in Figure~\ref{fig:gen}), we provide a three-way handshake (similar to the basic scheme) such that each handshake is used to couple an earlier verified handshake. Also, the vehicle to keyfob authentication incorporates an authentication with human attribution. Next, we provide definitions for the personified attributes such as gait, displacement, and gait observation.

%as per their usage in three way handshake during vehicle to keyfob authentication.

%\medskip
\begin{definition}[Gait]
	\textit{A dynamic keyfob holder has a distinguishable gait in terms of the velocity (displacement per unit time) and stride (distance covered per step).}
\end{definition}

\begin{definition}[Displacement]
	\textit{The distance covered by the keyfob holder at a varying velocity during the observation interval, i.e., from access solicitation (in the first step) to time split received (in the second step).}
\end{definition}

\begin{definition}[Gait observation]
	\textit{A dynamic keyfob holder, approaching the static vehicle and simultaneously requesting the access permission, should be distinguishable from any malicious third party. The vehicle verifies the deviation in local observation and original keyfob observation regarding dynamic locomotion pattern.}
\end{definition}

\begin{figure}[h]
	\begin{center}
		\includegraphics[scale=0.3]{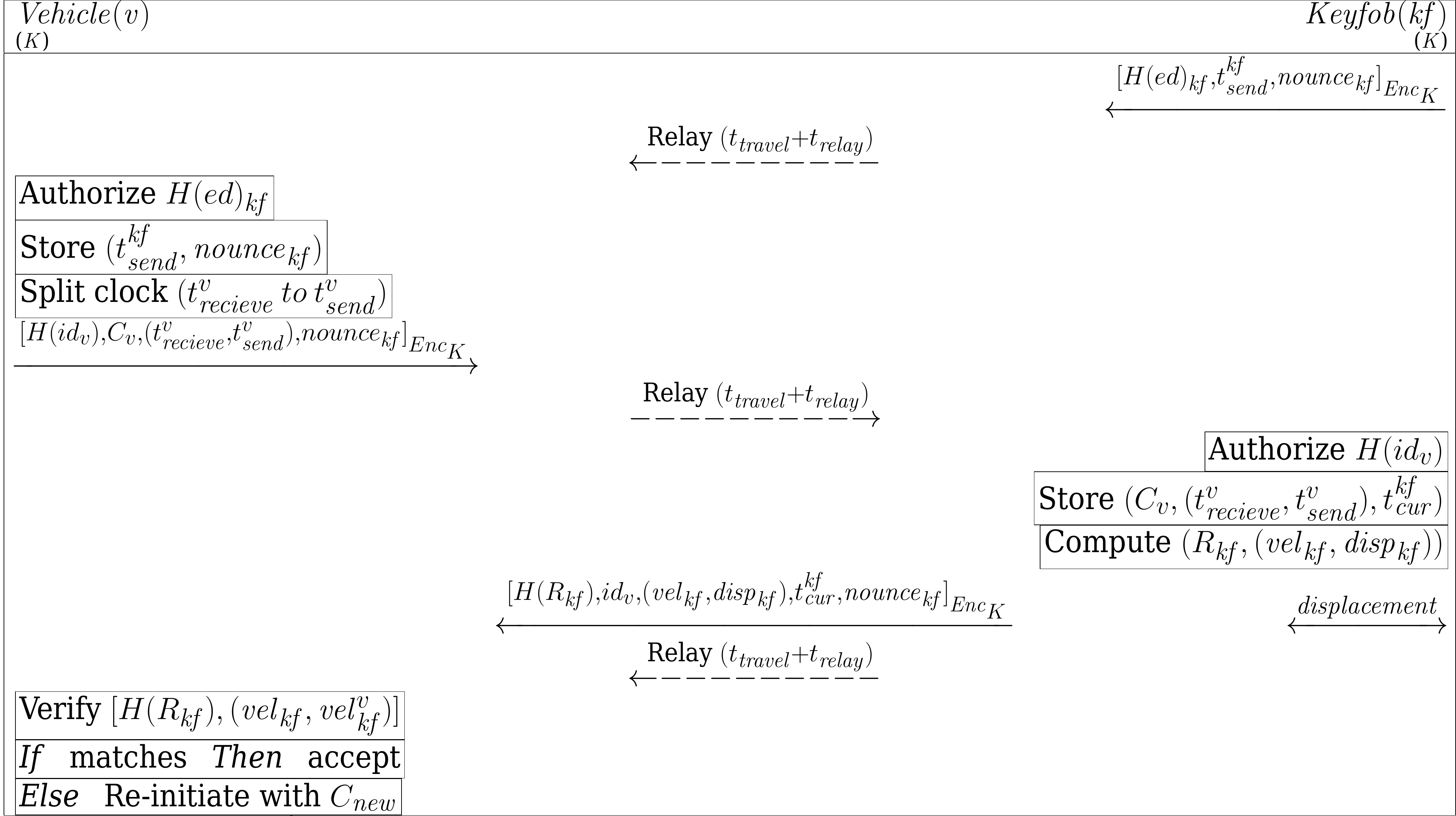}
	\end{center}
	\vspace{0.2cm}
	\BBB
	\caption{Vehicle to keyfob authentication.}
	\label{fig:gen}
\end{figure}

%\begin{enumerate}
These attributes as mentioned above define an interactive verification of the owner. The three-way handshake as presented in Figure~\ref{fig:gen} is detailed below:

\begin{itemize}%[noitemsep,nolistsep,leftmargin=0.1cm]
	\item\textit{Keyfob soliciting the access permission:} In the {\em first step}, a keyfob holder initiates an access request to a static vehicle. The request should be encrypted with the symmetric key $K$ as originated from the authentic keyfob, accompanied with the EDR mobility pattern that was recorded during the last drive. The request ${[H(ed)_{\mathit{kf}},t^{\mathit{kf}}_{\mathit{send}},nounce_{\mathit{kf}}]}_{Enc_{K}}$ consists of hashed internal state $H(ed)$, which contains EDR mobility pattern, a time stamp $t^{\mathit{kf}}_{\mathit{send}}$, and a unique $nounce_{\mathit{kf}}$, for observing the gait and avoiding replay attacks, respectively.
	
	\item\textit{Vehicle challenging the keyfob:} In the {\em second step}, the vehicle verifies the received hashed EDR mobility pattern $H(ed)_{\mathit{kf}}$ with $H(ed)_{v}$ that must be known to an authentic keyfob that has the knowledge of the symmetric key $K$. Moreover, the vehicle mobility pattern must be known to a keyfob iff it was used during the last drive. The vehicle stores the time stamp $t^{\mathit{kf}}_{\mathit{send}}$ as the time when the access request was initialized by the keyfob holder. In addition, the vehicle stores a local time stamp $t^v_{\mathit{receive}}$ and the time difference ($t^v_{\mathit{receive}}-t^{\mathit{kf}}_{\mathit{send}}$), which will yield the propagation time that request has taken while traversing over the channel. It must be noticed that the presence of a relay attack in between the static vehicle and the dynamic keyfob holder will yield a higher propagation time as compared to regular communication, i.e., $(t^v_{\mathit{receive}}-t^{\mathit{kf}}_{\mathit{send}})_{\mathcal{A}}>(t^v_{\mathit{receive}}-t^{\mathit{kf}}_{\mathit{send}})_{regular}$, due to additional relaying times consumed at the adversray ($\mathcal{A}$) in middle, as compared to the regular communication. The vehicle sends an encrypted message ${[H(id_{v}),C_v,(t^v_{\mathit{receive}},t^v_{\mathit{send}}),nounce_{\mathit{kf}}]}_{Enc_K}$ containing a challenge $C_v$ and the local time stamps $t^v_{receive}$ to complete the gait observation of requesting keyfob.

	\item\textit{Authentication process at the keyfob:} Next, keyfob holder authorizes the signed vehicle id $H(id_{v})$ with the same nounce as was initialized in the first step (as shown in Figure~\ref{fig:gen}). The clock split $(t^v_{\mathit{receive}},t^v_{\mathit{send}})$ is used to compute the relative propagation time\footnote{For the practical deployament scenario time precision is a crucial issue, e.g., how frequently hardware clocks must be synchronized? what is the allowed measure for the clock drift? Therefore, we consider the strict event state verification along with the gait verification over the current clock-split window allowing a pre-defined amount of clock drift due to hardware limitations.} in each step. We assume that the generation of paired response $R_{\mathit{kf}}$ for the challenge $C_{v}$ in the current round is based on a keyed pseudo-random number generator that is not known to an adversray. Therefore, the adversary can only generate a random response $R^{\prime}$ (instead of actual $R_{kf}$) for the $C_{v}$, such that $R^{\prime} \neq R_{\mathit{kf}}$ without knowing the secret key $K$. Otherwise, an adversary could send a random $R^{\prime}$ ahead of time and demonstrate that $R^{\prime}$ is the authentic response (due to a lower propagation time from an adversary situated closer to the vehicle then the keyfob), i.e., $C_{v}$ traveled from vehicle to the original keyfob and an adversary in the middle responding with $R^{\prime}$ instead of $R_{kf}$ (which would have actually arrived a few seconds later then $R^{\prime}$).

	The keyfob will process the local gait observation for the time slot ($t^{\mathit{kf}}_{\mathit{cur}}-t^{\mathit{kf}}_{\mathit{send}}$). Therefore, the keyfob holder's displacement $\mathit{disp}_{\mathit{kf}}$ during this time interval will yield a velocity or a local gait observation $\mathit{vel}_{\mathit{kf}}$. Ideally, this gait observation at the keyfob holder and the same gait observation $\mathit{vel}^v_{\mathit{kf}}$ at the static vehicle must at least differ by an amount $t_{\mathit{relay}}$ to detect the presence of any relay attack.
	
	%you have to write here what keyfob holder is sending...
	
	%Before moving on to the next step of protocol handshake. We wouldlike to remark
	%\noindent\textbf{Remark.}
	
	We would like to remark here that the two steps of handshake scheme verify proactive and reactive commitments. The last step of handshake scheme deals with the verification of the personalized properties of an access grantee. In particular, an anthropomorphic feature, e.g., walking pattern or gait observation, is verified. There can be different attributes to personify and distinguish the original owner, e.g., key typing speed pattern, face scan~\cite{sib}, shake pattern~\cite{well} etc. Our focus is to integrate those personified attributes into last round of handshake verification.
	
	\item\textit{Vehicle verifying the keyfob's gait:} 
	%First write what car receives from an authrnticate keyfob.......and then write a car may receive some thingXXX  from adversary but deny it and we exaplin how to verify the correct thing.
	The static vehicle verifies the paired response $H(R_{\mathit{kf}})$ and the current session nounce. Next, the vehicle computes a local estimation of the keyfob holder's gait. Accordingly, the keyfob holder should have been displaced by $\mathit{disp}_{\mathit{kf}}$ (from the second step) over the time interval ($t^v_{\mathit{send}}-t^v_{\mathit{receive}}$).
	
	Now, considering the presence of a relay attacker in first and second steps, the local gait observation at the keyfob holder should differ by the local gait estimation at the static vehicle. Moreover, since the time window for the gait observation includes two propagation rounds; hence $2 \times (t_{\mathit{travel}}+t_{\mathit{relay}})$ should be suspiciously large enough with respect to $2 \times (t_{\mathit{travel}})$ in a regular communication. Furthermore, our approach can be extended to the well-known distance bounding protocols in which $n$ number of CRP exchanging rounds are required to compute a precise measure over the round trip propagation time. In our case, if the gait based measurement at the static vehicle differs with the local observation at the keyfob holder, then the protocol should be re-initialized with a new challenge $C_{\mathit{new}}$. Furthermore, even in a re-initialized session, the first scenario can be used as a base case in distance bounding protocols. Therefore, our solution can further be extended to provide a distance bounding protocol estimation.
\end{itemize}
%\end{enumerate}

Considering the issue of multiple drivers having access to the same vehicle during different time intervals; we suggest to incorporate an admin account (with full vehicle access) and a limited additional user accounts with partial/customized vehicle access. Therefore, each individual user account is accustomed to the correspondingly synchronized coupling between the keyfob and the internal state of the vehicle, i.e., $\mathit{keyfob}_{\mathit{admin}}$ is synchronized with the user $\mathit{account}_{\mathit{admin}}$. Similarly, a limited number of $n$ user accounts ${\mathit{user\:account_{[1\dots n]}}}$ are correspondingly synchronized with the ${\mathit{keyfob_{[1\ldots n]}}}$. In addition, a keyfob holder must possess a pedometer or an accelerometer for a precise self-gait observation, which nowadays is a common feature in smartphones. In fact, the cost and size for these utilities are consumer friendly in a sense that can also be integrated with something as small as a wireless keyfob.

\section{Adaptation with Existing Commitment Schemes}\label{sect:peddepi}
In this section, we illustrate further adaptations of our proposed solution based on initial knowledge pairing, i.e., proactive verification, and anthropomorphic feature, i.e., reactive verification. Interestingly, we choose two protocols, namely, Schnorr's identity-based commitment scheme ($\Gamma$) and Pedersen commitment scheme ($\Pi$) based on the \textit{commitment-before-knowledge} and \textit{Computational Diffie-Hellman (CDH)} assumptions.

\vspace{0.2cm}
\noindent
%\begin{center}
	%	\fbox{
%	\begin{wrapfigure}{r}{0.8\textwidth}[h]
	\begin{figure}[h]
		\begin{framed}
			\begin{minipage}{6.4in}
				\footnotesize
				\noindent
				\textbf{The message flow}:
				
				\textit{Common inputs} are ($\wp,q,g,G$) \\
				\textit{Keygen(pk,sk)}: $sk=a {\in}_R F_{\wp}$ and $pk=g^a=A$ \\
				The protocol $\Gamma(\mathit{kf},v)$ steps for a $\mathit{kf}(a,A)$ and a verifier $v(A)$ are given below:
				
				%\vspace{-0.2cm}
				\begin{center}
					\begin{tabular}{l l l}
						($\mathit{kf} \rightarrow v$) M1: & compute $X=g^x \in G$ such that $x \in_{R} F_{\wp}$ & \\
						($\mathit{kf} \leftarrow v$) M2: & send challenge $\Bbbk$ such that $\Bbbk {\in}_R F_{\wp}$ & \\
						($\mathit{kf} \rightarrow v$) M3: & reply response $\rho$ such that $\rho=x+a*{\Bbbk}$  & \\
						%($\mathit{kf} \leftarrow v$) M4: &  verify $g^{\rho}==XA^{\Bbbk}$      & \\
					\end{tabular}
				\end{center}
				\noindent
				\textbf{Schnorr's protocol adaptation}:
				\vspace{0.1cm}
				\begin{enumerate}
					\item \textit{Commit:} Initially, a prover $\mathit{kf}$ commits a secret value $x$ by sending $X=g^x \in G$ to a verifier $v$. Next, the hashed string $H{(ed)}_{\mathit{kf}}$ (as derived using event data history $ed$) is concatenated with secret exponent $g^x$, i.e., $H(ed)||X$. In this case, secret exponent $x$ is not hidden but released secretly by just relying on the hardness assumption of $DL$ problem. Therefore, presumably, if an adversary is able to solve $DL$ then secret exponent part $x$ of the commitment would not remain secure and overall security might fallback on $H(ed)$ verification only.
					\item \textit{Challenge:} The recipient vehicle $v$ verifies the hashed string based on event history, and also retrieve the concatenated value $X$ as a public version of the secret commitment $x$. After receiving a public value $X$, verifier choose a random challenge $\Bbbk$ and send to the prover in order to seek for a valid paired response from an authentic prover $\mathit{kf}$.
					\item \textit{Open:} Consequently, prover $\mathit{kf}$ generate a paired response as $\rho=x+a*{\Bbbk}$. Interestingly, prover reveals the commitment value $x$ in a subliminal way. Obviously, the verification remains valid as long as $g^{\rho}==XA^{\Bbbk}$ satisfies on verifier side. Furthermore, there exist multiple variations to regular Schnorr like identification scheme. For example, one possible way~\cite{keser} is to replace $\rho$ with $\hat{\rho}$, i.e., $\hat{g}^{x+a*{\Bbbk}}$ where $\hat{g}=g^{\mathcal{H}(X|\Bbbk)}$.
				\end{enumerate}
			\end{minipage}
		\end{framed}
		\caption{Adaptation with the Schnorr's commitment protocol.}\label{fig:schen}
	\end{figure}
%\end{center}
\vspace{0.2cm}

As the protocol, $\Gamma$ by Schnorr~\cite{beryyinluv} and protocol $\Pi$ by Chaum and Pedersen~\cite{chour} both are based on Diffie-hellman (DH) key exchange. Therefore, wlog, we assume that corresponding computations are done within a group $G$ $=$ $\langle g \rangle$ of prime order $q$, where CDH assumption holds.

\begin{definition}\label{def:military}
	Let $\langle g \rangle$ be a cyclic group $G$ generated by an element $g$ of an order $q$. There is no efficient probabilistic algorithm $\mathcal{A}_{CDH}$ that given $(g,g^a,g^b)$ produces $g^{ab}$, where $a$, $b$ are chosen at random from the group $G$.
\end{definition}

The CDH assumption satisfies that the computation of a discrete logarithm function $DL$ on public values $(g,g^a,g^b)$ is hard~\cite{deborah} within the cyclic group $G$.

\smallskip\noindent\textbf{Schnorr's scheme~\cite{beryyinluv}} The Schnorr's identification scheme $\Gamma$ is a natural extension of commitment-before-knowledge based paradigm as detailed in Figure~\ref{fig:schen}. According to protocol $\Gamma$, the prover chooses a secret DH exponent $x$ and releases a public value $X$ for the corresponding verifier. Consequently, the verifier returns a challenge $\Bbbk$ for the prover. Now, the prover generates a combined response $\rho$ such that it is computationally hard to compute $\rho$ without possessing the knowledge of $x$ and $\Bbbk$, i.e., $\rho=x+a*{\Bbbk}$ where $a$ is the long-term secret key paired with long-term public key $A$ presumably known to verifier. Ultimately, the verifier validates the received response $\rho$ as $g^{\rho}=XA^{\Bbbk}$ if correctly paired with the challenge $\Bbbk$ and public values $X$ or not.

\smallskip\noindent\textbf{Pedersen commitment~\cite{chour}.} Here, we provide the review of commitment scheme $\Pi$ as detailed in Figure~\ref{fig:ped}. According to the Pedersen's scheme, the system generates public values ($q,g,h,G$), where $g$ and $h$ are two generators such that $g=h^x$, for a secret key $x {\in}_R G$. Moreover, in order to commit a message $m$, prover knows ($x,q,g,h,G$) and the verifier knows ($q,g,h,G$), where $z=m^x$ is observed as a signature along with the proof that $log_gh=log_mz$.

The adapted solution further increases the security strength of the proposed scheme during second and third round of the handshake. It must be noticed that our scheme can incorporate other challenge-response protocols as well. Therefore, the proposed solution is open to further application dependent improvements via security integration.

\vspace{0.2cm}
\noindent
%\begin{center}
	%\fbox{
	\begin{figure}[h]
		\begin{framed}
			\begin{minipage}{6.4in}
				\footnotesize
				\noindent
				\textbf{The message flow}:
				
				Prover $\mathit{kf}$ knows $(x,y) {\in}_R F_{\wp}$ such that public key $X=(g^x*h^y)$. \\
				\textit{Common inputs} are ($\wp,q,g,h,G$) \\
				\vspace{-0.2cm}
				\begin{center}
					\begin{tabular}{l l l}
						($\mathit{kf} \rightarrow v$) M1: & compute commitment $A=g^s_1$ and $B=h^s_2$  & \\
						($\mathit{kf} \leftarrow v$) M2: & send challenge $\Bbbk {\in}_R F_{\wp}$  & \\
						($\mathit{kf} \rightarrow v$) M3: & reply response $\rho_1=s_1+{\Bbbk}*x$ and $\rho_2=s_2+{\Bbbk}*y$ & \\
					\end{tabular}
				\end{center}
				\noindent
				\textbf{Pedersen's protocol adaptation}:
				\vspace{0.1cm}
				\begin{enumerate}
					\item \textit{Commit:} Initially, prover $\mathit{kf}$ wants to commit a message $m \in F_{\wp}$. The prover chooses a random value $(s_1,s_2) \in F_{\wp}$, where $F_{\wp}$ is a finite field with elements (${0,1,...,\wp-1}$) and then compute the commitment $C=A*B$, where $A=g^s_1$ and $B=h^s_2$ in order to share with the verifier $v$, i.e., $H(ed)_{\mathit{kf}}||(c)$.
					\item \textit{Challenge:} The verifier $v$ computes the hashed string based on event data history and verify locally. After the verifications $v$ returns a challenge $\Bbbk {\in}_R F_{\wp}$ to prover.
					\item \textit{Open:} The verifier $v$ accepts the proof of commitment if $g^{\rho_1}*h^{\rho_2}=XC^{\Bbbk}$.
					%=Ah^{\Bbbk}$ and $m^{\rho}=Bz^{\Bbbk}$.
				\end{enumerate}
			\end{minipage}
		\end{framed}
		\caption{Adaptation with the Pedersen's commitment protocol.}\label{fig:ped}
	\end{figure}
%\end{center}
\vspace{0.2cm}

\section{Security Analysis}
\label{sect:crpt}
In this section, we present a security analysis based on the following Assumptions~\ref{asm:van} and~\ref{asm:get}. Assumption~\ref{asm:van} is related to adversary model or adversarial behavior per se, i.e., a restriction on \textit{store-and-forward} syndrome (Condition~\ref{con:syn}) at an adversary. While, Assumption~\ref{asm:get}, in particular, specifies the significance of synchronized (at least within the pre-determined threshold) observation windows at the static vehicle and the corresponding keyfob holder, respectively. Next, we provide Condition~\ref{con:syn} and the related assumptions as below:

\begin{condition1}\label{con:syn}
	According to a store-and-forward syndrome, an adversary, while actively listening to the radio signals from an authentic access grantee (not closer to the static vehicle), ``intentionally'' delays the radio signal relaying so as to adapt the locomotion pattern of an authentic access grantee.
\end{condition1}

Condition~\ref{con:syn} presents a syndrome that an adversary might gain a fake access. However, the Assumption~\ref{asm:van} presents that an adversary cannot exercise the store-and-forward syndrome inconsistently, since that would yield a noticeable variation in any two consecutive propagation delay, i.e., with an adversary and without an adversary. Obviously, an adversary has to relay the communication consistently so as to remain undetectable. However, an adversary cannot relay the signals consistently because that would manipulate the locomotion pattern ($\mathit{vel}^v_{\mathit{kf}},\mathit{disp}^v_{\mathit{kf}}$) of keyfob holder as observed by the vehicle.

\begin{assumption}\label{asm:van}
	An adversary $\mathcal{A}$ cannot utilize the store-and-forward syndrome either consistently or inconsistently, such that: (\textit{i}) $(t^{\mathit{kf}}_{\mathit{cur}}-t^{\mathit{kf}}_{\mathit{send}})>(t^{v}_{\mathit{receive}}-t^{v}_{\mathit{send}})+t_{\epsilon}$ `or' (\textit{ii}) $t_{\mathit{relay}}$ is constant, respectively.
	%so as to get the access on behalf of authentic access grantee $kf$ or to manipulate the locomotion pattern (${vel^v}_{kf},{disp^v}_{kf}$) of keyfob holder as observed by the vehicle.
\end{assumption}\B

Assumption~\ref{asm:get} presents the similar observation at vehicle and keyfob, simultaneously. The deviation beyond a threshold $t_{\epsilon}$ would raise an alarm of relayed communication. It might result in false negatives (in which case handshake must be re-initialized) rather false positives.

\begin{assumption}\label{asm:get}
	\textit{The gait observation at the static vehicle and the dynamic keyfob holder should not differ beyond a threshold delay $t_{\epsilon}$, i.e. (${t_{\mathit{travel}}+t_{\epsilon}}\simeq{t_{\mathit{travel}}}$), where (${t_{\epsilon}}<{t_{\mathit{relay}}}$).}
\end{assumption}\B

Let us assume that an adversary $\mathcal{A}$ relays the messages between the static vehicle and corresponding keyfob while $t_{\mathit{relay}}$ is independent of $t_{\mathit{travel}}$. The adversary relay the messages successively for $n$ rounds and has the advantage as ${Adv(\mathcal{A})=Pr({1/2}^{n})}$. %which is ${Adv(\mathcal{A})=P({1/2}^3)}$ for the proposed approach.
Now, considering the advantage that adversary $\mathcal{A}^{t_{\mathit{relay}}}$ can relay the messages only within the time bounds ($t_{\mathit{travel}}+t_{\mathit{relay}}$). Thus, the advantage  ${Adv(\mathcal{A}^{t_{\mathit{relay}}})=(\mathcal{A})*Pr({t_{\mathit{relay}}}>{t_{\epsilon}})}$ or
$${=[(1/2)*Pr({t_{\mathit{relay}}}>{t_{\epsilon}})*Pr({2t_{\mathit{relay}}}>{t_{\epsilon}})*({3t_{\mathit{relay}}}>{t_{\epsilon}})]}$$

The probability of time ($t_{\mathit{relay}}$) bound adversary winning the advantage is directly proportional to the difference in ($t_{\mathit{relay}},t_{\epsilon}$). Assuming that for every succeeding round the difference would accumulate and thus, ($Pr({2t_{\mathit{relay}}}>{t_{\epsilon}})>0$) in the second round. Thereby, on average with every succeeding round, there is at least as many chances to win as $Pr(1/2)$. %i.e., ${=[(1/2)^{4}]}$.
Evidently, the proposed approach with average time-bound ($t_{\mathit{relay}}$) for every succeeding round would restrict the advantage to ${Adv(\mathcal{A})}-{Adv(\mathcal{A}^{t_{\mathit{relay}}})}$. %($P({{1/2}^3}\textnormal{-}{{1/2}^4})$).

The proposition~\ref{prp:perp} illustrates the hard and rock situation for the adversary, i.e., an adversary will either be present consistently throughout multiple rounds or will be present for selective rounds only. In both cases, the presence of any relayed communication can be detected.

\begin{proposition}~\label{prp:perp}
	\textit{If the adversary is present in only one of the first two rounds it must be revealed to both vehicle and the keyfob, by the end of step 3. Conversely, if the adversary is present uniformly throughout the course of gait observation it would be accountable via differing gait observations.}
\end{proposition}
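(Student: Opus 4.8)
The plan is to argue by a dichotomy on the adversary's relaying behaviour across the first two rounds, matching the two sentences of the statement to the two ways $\mathcal{A}$ can schedule its store-and-forward relaying (Condition~\ref{con:syn}). First I would fix the timing quantities that each party can reconstruct at the end of step 3: the keyfob knows its own $t^{\mathit{kf}}_{\mathit{send}}$ and $t^{\mathit{kf}}_{\mathit{cur}}$, and it receives $t^v_{\mathit{receive}}$ and $t^v_{\mathit{send}}$ inside the encrypted challenge, so it can separately form the two one-way propagation segments $t^v_{\mathit{receive}}-t^{\mathit{kf}}_{\mathit{send}}$ and $t^{\mathit{kf}}_{\mathit{cur}}-t^v_{\mathit{send}}$; symmetrically the vehicle holds $t^v_{\mathit{receive}},t^v_{\mathit{send}}$ and the keyfob's reported gait, together with its own local gait estimate over the window $t^v_{\mathit{send}}-t^v_{\mathit{receive}}$.

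For the first sentence (adversary in exactly one of the first two rounds), the key observation I would exploit is that a relay present in only one direction inflates exactly one of the two propagation segments by $t_{\mathit{relay}}$ while leaving the other at its honest value $t_{\mathit{travel}}$. The two segments, which coincide up to channel symmetry in honest operation, now differ by $t_{\mathit{relay}}$, and since $t_{\epsilon}<t_{\mathit{relay}}$ by Assumption~\ref{asm:get} this gap exceeds the detection threshold. Hence the keyfob flags the asymmetry directly; and because the unequal segments skew the observation windows feeding the gait computation, the vehicle's local gait estimate disagrees with the keyfob's reported gait by more than $t_{\epsilon}$, so the relay is revealed to both parties by the close of step 3.

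For the second sentence (adversary uniformly present), I would first note that a constant $t_{\mathit{relay}}$ in both directions keeps the two propagation segments equal, so the asymmetry test of the first case stays silent; this is precisely clause (ii) of Assumption~\ref{asm:van}. Detection then shifts to the gait: the keyfob observes the holder's displacement over the dilated interval $t^{\mathit{kf}}_{\mathit{cur}}-t^{\mathit{kf}}_{\mathit{send}}$, which now absorbs the relay insertions, whereas the vehicle reconstructs the expected displacement over its un-dilated processing window $t^v_{\mathit{send}}-t^v_{\mathit{receive}}$. Since velocity is displacement per unit time, the extra $2\,t_{\mathit{relay}}$ in the keyfob's window yields a gait that the vehicle cannot reconcile with its local estimate, and by Assumption~\ref{asm:get} this mismatch again surpasses $t_{\epsilon}$. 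Appealing to Assumption~\ref{asm:van}, which forbids $\mathcal{A}$ from escaping both clause (i) and clause (ii), closes the dichotomy: every relaying schedule lands in one of the two cases and is caught.

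The main obstacle I anticipate is the second case: making the gait discrepancy rigorous requires tying the dilated time window cleanly to a measurable displacement gap and showing that the holder's genuine locomotion cannot, for a uniformly relaying adversary, happen to cancel the inserted $2\,t_{\mathit{relay}}$. This is where Assumption~\ref{asm:get}, stating that honest gait windows agree up to $t_{\epsilon}$ and that $t_{\epsilon}<t_{\mathit{relay}}$, does the real work, guaranteeing that the relay-induced inflation is never masked by ordinary observational noise in the velocity estimate.
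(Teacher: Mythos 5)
Your proof is correct and follows essentially the same route as the paper: the paper justifies Proposition~\ref{prp:perp} by exactly this dichotomy---selective (one-round) relaying is exposed by the propagation-delay variation of Assumption~\ref{asm:van}(\textit{i}), while uniform relaying forces a constant $t_{\mathit{relay}}$ (Assumption~\ref{asm:van}(\textit{ii})) whose $2t_{\mathit{relay}}$ dilation of the gait-observation window violates the $t_{\epsilon}<t_{\mathit{relay}}$ bound of Assumption~\ref{asm:get}, yielding differing gait observations. If anything, your write-up makes explicit the per-party timing reconstructions (the two one-way segments and the two observation windows) that the paper leaves implicit, since the paper states the proposition with only the surrounding informal discussion as its justification.
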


The proposition~\ref{prp:tor} emphasize that any observable dynamics (location vs distance) will distinguish between a regular communication and an affected communication.

\begin{proposition}\label{prp:tor}
	\textit{No adversary can relay the messages such that both gait observations at the vehicle $\mathit{vel}^v_{\mathit{kf}}$ and the keyfob $\mathit{vel}_{\mathit{kf}}$ are equivalent or within the bounds of $t_{\epsilon}$. The presence of an adversary will yield ($\mathit{vel}^v_{\mathit{kf}}>{\mathit{vel}_{\mathit{kf}}}$).}
\end{proposition}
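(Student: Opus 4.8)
The plan is to write both gait observations as the ratio of one and the same measured displacement $\mathit{disp}_{\mathit{kf}}$ to a time window, and to reduce the entire claim to a comparison of the two windows. First I would fix the timeline of a single handshake: the holder stamps $t^{\mathit{kf}}_{\mathit{send}}$ at solicitation and closes its observation when the challenge arrives at $t^{\mathit{kf}}_{\mathit{cur}}$, so its self-reported gait is $\mathit{vel}_{\mathit{kf}}=\mathit{disp}_{\mathit{kf}}/(t^{\mathit{kf}}_{\mathit{cur}}-t^{\mathit{kf}}_{\mathit{send}})$; the vehicle, knowing $t^{\mathit{kf}}_{\mathit{send}}$ from the first message together with its own stamps $t^v_{\mathit{receive}},t^v_{\mathit{send}}$ and the expected (relay-free) propagation, reconstructs the window $W_v$ it can certify and forms $\mathit{vel}^v_{\mathit{kf}}=\mathit{disp}_{\mathit{kf}}/W_v$. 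Because $\mathit{disp}_{\mathit{kf}}$ is the single physical distance the holder walked, the statement collapses to comparing $W_v$ against the holder's true window $t^{\mathit{kf}}_{\mathit{cur}}-t^{\mathit{kf}}_{\mathit{send}}$.

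Next I would treat the honest exchange. With no relay, the stamps satisfy $t^v_{\mathit{receive}}=t^{\mathit{kf}}_{\mathit{send}}+t_{\mathit{travel}}$ and $t^{\mathit{kf}}_{\mathit{cur}}=t^v_{\mathit{send}}+t_{\mathit{travel}}$, so the holder's window and the reconstructed $W_v$ differ only by the symmetric, a-priori-known round trip, and Assumption~\ref{asm:get} (that $t_{\mathit{travel}}+t_{\epsilon}\simeq t_{\mathit{travel}}$ with $t_{\epsilon}<t_{\mathit{relay}}$) forces the two gaits to agree inside the tolerance $t_{\epsilon}$. I would then insert a uniformly present relay that adds $t_{\mathit{relay}}$ per hop: the holder keeps moving through both inflated hops, so $\mathit{disp}_{\mathit{kf}}$ grows by the extra $2t_{\mathit{relay}}$ of walking, yet the window $W_v$ that the vehicle can certify absorbs only the outbound hop it actually measured and not the challenge-return relay folded into $t^{\mathit{kf}}_{\mathit{cur}}$. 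Dividing the inflated numerator by the relay-deficient denominator yields $\mathit{vel}^v_{\mathit{kf}}>\mathit{vel}_{\mathit{kf}}$, and since the missing term is at least $t_{\mathit{relay}}>t_{\epsilon}$, the discrepancy provably exceeds the tolerance rather than hiding inside it, which is exactly the claimed strict inequality.

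The hard part will be ruling out an adversary who shapes $t_{\mathit{relay}}$ to make the two windows coincide, so the crux is to close both escape routes using Assumption~\ref{asm:van}. If the relay is held perfectly constant (clause (ii)), the extra motion is already baked into $\mathit{disp}_{\mathit{kf}}$ and cannot be subtracted out by the vehicle, so the gait inequality of the previous paragraph stands; if instead the relay is varied to mimic a shorter return, clause (i) shows it drives the holder's window past the vehicle-side window by more than $t_{\epsilon}$, leaving a detectable round-to-round variation in the propagation delay. I would finish by aligning this dichotomy with Proposition~\ref{prp:perp}: an adversary present in only one of the first two rounds is already exposed at step~3, whereas an adversary present throughout is precisely the uniform-relay regime just analysed, so in every configuration the reconstructed gait strictly dominates the locally observed one and no relay can drive $\mathit{vel}^v_{\mathit{kf}}$ down into the $t_{\epsilon}$-band around $\mathit{vel}_{\mathit{kf}}$.
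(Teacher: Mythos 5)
Your proposal is correct and is essentially the argument the paper intends: the paper states Proposition~\ref{prp:tor} without a separate written proof, letting it rest on Condition~\ref{con:syn}, Assumptions~\ref{asm:van} and~\ref{asm:get}, and the preceding relay-timing discussion, and your same-displacement/two-window comparison, your use of Assumption~\ref{asm:van} to exclude both constant and compensating relay delays, and your appeal to Proposition~\ref{prp:perp} for the partial-presence case reproduce exactly those ingredients in explicit form. One wording slip does not affect the conclusion: since both ratios share the identical reported $\mathit{disp}_{\mathit{kf}}$, the inequality $\mathit{vel}^v_{\mathit{kf}}>\mathit{vel}_{\mathit{kf}}$ follows solely from the keyfob's observation window exceeding the vehicle-certified window by the relay time (at least $t_{\mathit{relay}}>t_{\epsilon}$ by Assumption~\ref{asm:get}), not from an ``inflated numerator'' divided by a ``relay-deficient denominator.''
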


%\noindent\textbf{Rewinding simulator:}

Theorem~\ref{them:firs} proves an impossibility to extract the pre-determined state of the vehicle, i.e., an exact series of events; given that an adversary is allowed to observe the vehicle movements during the last itinerary of that vehicle. Precisely, the proof is based on the number of geometric trajectories that a vehicle could possibly drive through during last itinerary, i.e., state based on the most recent driving pattern.

\begin{theorem}\label{them:firs}
	The series of event occurrence while following a continuous geometric trajectory is predictable with only a negligible probability.
	%An adversary has a negligible advantage in retrieving the exact series of event occurrence during the last driving interval.
\end{theorem}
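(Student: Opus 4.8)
The plan is to convert the informal statement into a counting bound on the fibers of the projection that maps a fine-grained event series to the coarse trajectory an external observer can see, and then to bound any predictor's success by the inverse of the smallest such fiber. First I would fix a model of the last itinerary consistent with the EDR description: discretize the recording window into $n$ sampling instants (the rate at which IEEE~1616 event primitives are logged), and represent the internal state at each instant as a tuple of the recorded primitives --- velocity, acceleration/deceleration, steering angle/movement, seat position --- each quantized to the EDR's resolution. Writing $b\ge 2$ for a lower bound on the number of physically admissible quantized transitions between consecutive samples, the set of admissible discretized geometric trajectories (equivalently, of admissible event series $ed$) has size at least $b^{\,n}$; this is the ``number of geometric trajectories'' on which the argument rests.

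Next I would make precise what an adversary who watches the drive actually obtains. Such an observer sees only the gross geometric path $\tau$ --- essentially the sequence of visited coordinates --- which is the image of the full event series under a projection $\pi$ that forgets the internal primitives invisible from the outside (micro-adjustments of steering that cancel, speed profiles realizing the same visited points, seat position, and everything not inferable from position alone). The decisive step is to lower-bound the fiber $\pi^{-1}(\tau)$: I would argue that, for any admissible observed path, at least $c^{\,n}$ distinct event series remain consistent with $\tau$ for some constant $c>1$, because at each sampling instant several internal configurations produce the same observed displacement within the EDR tolerance. Conditioned on the observation, the true series is essentially uniform over this fiber, so any predictor --- deterministic or randomized --- outputs the exact series with probability at most $1/|\pi^{-1}(\tau)|\le c^{-n}$, which is negligible in $n$.

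The main obstacle is exactly this fiber lower bound: showing that observing the coarse trajectory still leaves super-polynomially many admissible event series, i.e. that $\pi$ is genuinely many-to-one with exponentially large fibers rather than nearly injective. This forces me to state two modelling assumptions explicitly --- that the quantization is fine enough and the primitive set rich enough for distinct internal states to share an observed displacement, and that $\tau$ carries no side channel correlated with the hidden primitives. A secondary, more technical point is the passage from the ``continuous geometric trajectory'' of the statement to the discretized count: I would justify it by noting that the EDR itself samples at a finite rate, so the secret $H(ed)$ depends only on the discretized series while the continuous idealization can only enlarge the candidate set. With these in hand the negligible-probability conclusion is immediate from the counting bound, and it is consistent with the earlier remark that $Pr[\mathit{Check}=1]$ is largest precisely when the event log is almost null (small $n$) and decays as the recorded history, hence $n$, grows.
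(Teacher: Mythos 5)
Your proposal is correct in substance but takes a genuinely different route from the paper. The paper stays entirely in the continuum: it decomposes the trajectory space $S$ into straight-line trajectories (Lemma~\ref{lem:one}: cardinality of the continuum, via slopes in $[0,2\pi)$ and intercepts parameterized by reals), curved trajectories (Lemma~\ref{lem:two}: reduced to tangent lines), and discrete events over a continuous time line (Lemma~\ref{lem:three}), and then concludes (Lemma~\ref{lem:four}) that a polynomial-time adversary guessing the tuple $(\mathit{gt}, t, e)$ has advantage on the order of $n/2^{\aleph_0}$. You instead discretize at the EDR sampling rate, count at least $b^{\,n}$ admissible quantized event series, and --- this is your key structural addition --- introduce the projection $\pi$ onto the externally observable path and lower-bound its fibers by $c^{\,n}$, so that the predictor's conditional success probability is at most $c^{-n}$. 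What the paper's approach buys is freedom from any quantization model: with uncountably many trajectories, exact prediction by any countable family of guesses is essentially impossible. What your approach buys is threefold: it is faithful to the actual system (the secret is $H(ed)$ of a finite, quantized record, so the continuum cardinality of ideal trajectories does not directly measure the entropy of the stored data); it yields a well-defined negligible bound in a genuine security parameter $n$, whereas the paper's expressions $\mathcal{C}/n$ and $n/\mathcal{C}$ mix cardinality with probability and are not probabilities in a formal sense; and, most importantly, your fiber argument squarely addresses the conditional question --- how many hidden event series remain consistent with what the adversary actually saw --- which the paper handles only with the informal remark that events ``observed correctly from a relative distance'' remain unpredictable. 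The cost is that your fiber lower bound rests on modelling assumptions (sufficiently rich primitives, no side channel in $\tau$) that you correctly flag as the main obstacle and that the paper's cardinality argument never has to confront.
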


\begin{proof} Let us consider a subset $L$ that encompasses straight-line trajectories in a set, say, $S$ denoted as $y=mx+b$, where ($x,y$) are coordinates, $m$ is the slope, and $b$ is the intercept with the $y$-axis. Now considering the closed interval [$w, z$] in which the line expands across the Euclidean plane. Interestingly, according to the Cantor's un-countability proof, the number of points in a line segment of the trajectory is the same as in an entire line (one-dimension) and as the number of points in a $n$-dimensional space. Considering, (\textit{i}) cardinality of the straight line trajectories, (\textit{ii}) cardinality of the curved trajectories and (\textit{iii}) cardinality of the event series over continuous time series.
	
	\begin{lemma} \label{lem:one}
		The cardinality of straight line trajectories $L$ in set $S$ is the same as the cardinality of continuum.
	\end{lemma}
	
	\begin{proof} According to (\textit{i}) the cardinality of straight line trajectories $L$ spanning over the road segment is same as the cardinality of continuum $\mathcal{C}$. Let us first compute the cardinality of all possible line segments through a particular coordinate point with a different slope. Thus, all those line segments that have a unique slope from a right open interval $[0,2\pi)$ and fixed coordinates ($x,y$) should be countable. Clearly, the cardinality of all these line trajectories is same as the $\mathcal{C}$ because the unique slope $m$ for each line trajectory is parameterized by reals. Second, let us consider the other remaining subset of straight line trajectories that have a fixed slope $m$ and expand for all possible ($x,y$) intercepts. Analogous to the Cantor hypotheses the line segment would have as many points as any $n$-dimensional space, therefore, there will be as many straight line trajectories as $\mathcal{C}$. Consequently, there will be $\mathcal{C}*\mathcal{C}$ lines trajectories in total.
		%\begin{flushright}
		%  $\blacksquare$
		%\end{flushright}
	\end{proof}

	\begin{lemma}\label{lem:two}
		The cardinality of curved trajectories $C$ in $S$ is the same as the cardinality of straight line trajectories $L$.
	\end{lemma}
	
	\begin{proof} A one to one onto function can be assumed for a realistic trajectory followed by any vehicle. According to (\textit{ii}) the cardinality of curved trajectories, $C$ can be derived from the cardinality of straight line trajectories as shown in Lemma 1. However, in spite of considering the cardinality of multi-degree polynomials, we precisely consider the case of a parabolic trajectory and then additively generalize it to the multi-degree curves. Since adding the cardinality of multi-degree curves on top of the cardinality of parabolic trajectories will only populate the subset $C$, hence, will increase the cardinality. The derivation of any segment of a parabolic trajectory analogs the derivation of at least one pair of straight lines tangent to each other, which at most can be as many as pairs of real numbers. Furthermore, the slope of these tangent lines is uniquely determined at any point on the curve. Since the tangent slope is uniquely parameterized from a right open interval $[0,2\pi)$ as a real number it is straight that the cardinality of parabolic trajectories will have the cardinality of straight lines $L$. Therefore, the cardinality of subset $C$ is at least as big as the cardinality of $L$ that is $\mathcal{C}$.
		%\begin{flushright}
		%  $\blacksquare$
		%\end{flushright}
	\end{proof}

	\begin{lemma}\label{lem:three}
		The cardinality of the discrete event set over a continuous time function is also continuous hence continuum.
	\end{lemma}
	
	\begin{proof} Let us consider the probability of any discrete event occurrence on a continuous time series. The continuous time function can be imagined as a collection of overlapping intervals $I=\{I_1, I_2,...I_n\}$, i.e. $n$ independent events can occur in $i^{th}$ interval $I_i$. Since the timeline is of the order of real numbers the probability of event occurrence is $\mathcal{C}/n$.
		%\begin{flushright}
		%  $\blacksquare$
		%\end{flushright}
	\end{proof}

	\begin{lemma}\label{lem:four}
		An adversary has a negligible advantage in retrieving the exact series of tuples from the last driving interval.
	\end{lemma}
	
	\begin{proof} An adversary successfully retrieves the tuple $\{\mathit{gt}, \mathit{time}, e\}$ if correctly predicts the geometric trajectory from $S$ and independent events $e$ over continuous time series. Therefore, assuming a probabilistic polynomial time adversary, an attack would only have a negligible advantage as $n/\mathcal{C}$. In particular, an adversary needs to have $n/2^{\aleph_0}$ number of attempts for a successful random guess in polynomial time.
		%\begin{flushright}
		%  $\blacksquare$
		%\end{flushright}
	\end{proof}
	
	Consequently, an adversary will have a negligible advantage in retrieving the internal event state of the vehicle with respect to the geometric trajectory over different time intervals, hence, Theorem~\ref{them:firs} is straight to articulate as below:
	
	The series of event occurrence during adjacent drives by the same vehicle through the same distance will always differ with a non-negligible probability. Let us consider the events ($e$) span over a 2d Euclidean plane %(see Figure~\ref{fig:parb})
	with x-axis and y-axis. A continuous geometric trajectory ($gt$) followed by the vehicle expands over the x-axis while still allowing the movement across y-axis such as during lane changing or curved maneuvering. An adversary $\mathcal{A}$ that successfully predicts the recent internal state of the vehicle. In order to predict the replicated state preserved by the key-fob adversary must have predicted a series of tuples ($gt, t, e$). The geometric trajectory expands over a set $S$ of straight lines and curves, i.e., $S=\{L, C\}$. It must be noticed that any curve inside the set $S$ is an umbrella term for a parabola or any higher degree polynomial. However, we prove that the parabolic trajectory covered by any vehicle between two different time stamps ($t_i, t_j$) is unpredictable with a negligible probability. Therefore, the series of events (which might have been observed correctly from a relative distance) followed during that trajectory is also unpredictable with a negligible probability.
	%[[cardinality of a set with straight lines, cardinality of a set with curves of any order.]]
\end{proof}

%%==============================================================================
%\begin{table*}[h]
%%	{\color{blue}
%		\begin{center}
%			\begin{tabular}{|l|r|r|}
%				\hline
%				Protocol & Expponentiations (per party) & Hash digest  \\ \hline
%				%P1 & P2 \\ \hline
%				Basic scheme & 3 & 1 \\ \hline
%				Customized scheme & 3 & 3 \\ \hline
%				Schnorr adaptation & 4 & 5 \\ \hline
%				Pedersen adaptation & 5 & 6 \\ \hline
%			\end{tabular}
%			\caption{Computation cost}
%		\label{tab:msg}
%		\end{center}
%\end{table*}
%
%%=============================================================================================

%In order to illustrate the possible deployment scenario with different IoT appropriate communication protocols such as Wifi, Wifi (with WEP, WPA, WPA2), Zigbee, Z-wave, Bluetooth, 6LoWPAN. 

The basic peripheral device authentication scheme does not assume any external channel bandwidth during the authentication except the computation requirements, i.e., 3 \texttt{exponent} and 1 \texttt{hash digest} computation at each party\footnote{Note that in case of adaptation with the Schnorr and Pedersen schemes the exponent computation cost will increase by 1 and 2 at each party, repectively.}. While the vehicle to keyfob authentication scheme requires an additional communication overhead with a maximum packet size that can hold the hash digest in each communication round. In addition, other parameters such as nounce, identity, time, displacement are scalar values and does not require much payload size. Moreover, different IoT appropriate communication protocols such as Wifi, Wifi (with WEP, WPA, WPA2), Zigbee, Z-wave, Bluetooth, 6LoWPAN, varies within the 64 Byte (minimum) to 2324 Byte (maximum) range. Therefore, we asume that the proposed scheme is feasible for deployment in IoT settings with the available communication protocols.

%==============================================================================
%\begin{table*}[h]
%	%{\color{blue}
%		\begin{center}
%			\begin{tabular}{|l|r|}
%				\hline
%				Protocol & Packet size (Bytes) \\ \hline
%				Wifi & 2304  \\ \hline
%				Wifi with WEP & 2312  \\ \hline
%				Wifi with WPA & 2324  \\ \hline
%				Wifi with WPA2 & 2320  \\ \hline
%				Zigbee & 128  \\ \hline
%				Z-wave & 64  \\ \hline
%				Bluetooth & 265  \\ \hline
%				6LoWPAN & 128  \\ \hline
%			\end{tabular}
%		\caption{Communication techniques and packet size requirements}
%		\label{tab:tech}
%		\end{center}
%\end{table*}

%=============================================================================================

\section{Related Work}\label{sect:prev} 
Remote keyless entry via transponder integrated with the ignition key or using immobilizer has been used actively~\cite{toward}.
A solution based on a distance bounding protocol and a verifiable multi-lateration scheme has been considered in~\cite{surgon,toward}. An immobilizer is used to avoid the vehicle movement even if an unauthorized access is gained. The immobilizer-based physical lock is useful to secure the physical periphery but not the digital periphery. In~\cite{tech} a coalition attack scenario has been solved, however, it differs in access control countermeasures required for a parked vehicle scenario. Therefore, our solution eliminates the need for specialized physically unclonable function (PUF) integration to the vehicle and the keyfob. Moreover, another work regarding the wireless device pairing~\cite{shake} allows a user identification based on the common movement. Also, a gait based authentication scheme is given in~\cite{gpat} for identifying that the same user is operating over two devices simultaneously.
The distance bounding~\cite{dbound,terror} provides a distance upper bound based on the round trip time measurement through multiple rounds of CRP exchange. However, our work is based on two-way CRP handshake to continuously measure the gait (independently) over a pre-defined interval, which is further verified in a third round. Apparently, EDR installation does not impose any safety challenges as the sole purpose of installation is the event recording. However, considering the data confidentiality and integrity issues the event log must be secure from any external penetration~\cite{atlit}. According to the existing work~\cite{meets} EDR recording can be secured by using a public-key cryptosystem; whereas any long-term records are not accessible without the consent of the vehicle owner. However, the latest records stored in a short-term storage must be accessible to the authorities while allowing privacy via time release functions (TRF). %While we assume the EDR record confidentiality via hashed abstraction of the record.

\section{Conclusion}\label{sect:fut}

The proposed solution provides a secure digital periphery for the autonomous
vehicles. The scheme is resilient against the \textit{distance fraud} and \textit{mafia fraud},
however, as a part of future direction, we aim to provide an additional protection against
the \textit{terrorist fraud} as well. According to the terrorist fraud, an authentic prover assists
with the adversary (by handing over secret component) to impersonate in front of the
verifier. Therefore, this attack scenario is practically feasible even with the biometric authentication
because the original secret holder can authorize himself, and, let the service be accessible to others who do not possess secret biometric credentials. The countermeasures would
require a more sophisticated identity verification method such as a ticket granting authority
(issuing tickets for service access) or the definability feature. %Furthermore, the solution does not cope with adversary that might have tuned the relaying delay, mimicking the change in distance over time in a malicious fashion.

%\section*{Acknowledgements}

%\bibliographystyle{abbrv}
%\bibliographystyle{elsarticle-harv}
%thebibliography
%\bibliographystyle{ieeetr}
%\bibliography{sample.bib}
%{\footnotesize \bibliography{AA-MRRelatedWork-full}}
%\bibliography{security-MRRelatedWork}

%\bibliographystyle{abbrv}
%\bibliographystyle{elsarticle-harv}
%thebibliography
%\bibliographystyle{abbrv}
%\bibliography{sample.bib}
%{\footnotesize \bibliography{security-MRRelatedWork}}
%\bibliography{security-MRRelatedWork}

\end{document}